\pdfoutput=1
\documentclass{article}
\pdfpagewidth=8.5in
\pdfpageheight=11in
\usepackage{ijcai20}

\usepackage{times}
\usepackage{soul}
\usepackage{url}
\usepackage[hidelinks]{hyperref}
\usepackage[small]{caption}
\usepackage{graphicx}
\usepackage{amsmath}
\usepackage{amsthm}
\usepackage{amssymb}
\usepackage{booktabs}
\usepackage[ruled,vlined,linesnumbered]{algorithm2e}
\usepackage{lipsum}
\usepackage[inline]{enumitem}

\usepackage{color,soul}
\usepackage{xspace}

\urlstyle{same}

\usepackage{xargs}

\newenvironment{myintro}%
  {\list{}{\leftmargin=0.1in\rightmargin=0.1in}\item[]}%
  {\endlist}

\newtheorem{theorem}{Theorem}
\newtheorem{corollary}{Corollary}
\newtheorem{conjecture}{Conjecture}
\newtheorem{lemma}{Lemma}
\newtheorem{definition}{Definition}
\newtheorem{proposition}{Proposition}

\newcommand{\nop}[1]{}
\newcommand{\fhw}{\mathit{fhw}}
\newcommand{\semantic}[1]{\ensuremath{\mathtt{sem}\mbox{-}{#1}}}
\newcommand{\sghw}{\ensuremath{\semantic{ghw}}}
\newcommand{\sfhw}{\ensuremath{\semantic{\fhw}}}

\newcommand{\ssubw}{\ensuremath{\semantic{subw}}}

\newcommand{\np}{\textsc{NP}\xspace}

\newcommand{\dpcomplexity}{\textsc{DP}\xspace}
\newcommand{\ptime}{\textsc{PTIME}\xspace}

\newcommand{\nxn}{n\times n}
\newcommand{\core}{\mathit{core}}
\newcommand{\csp}{\textsc{CSP}\xspace}
\newcommand{\pcsp}{p\textsc{-CSP}\xspace}
\newcommand{\bucq}{p\textsc{-UCQ}\xspace}

\newcommand{\classH}{\mathcal{H}\xspace}
\newcommand{\classA}{\mathcal{A}\xspace}
\newcommand{\classU}{\mathcal{U}\xspace}
\newcommand{\classC}{\mathcal{C}\xspace}
\newcommand{\classGrid}{\mathcal{G}_{\nxn}\xspace}
\newcommand{\degen}{exotic\xspace}

\newcommand{\bfA}{\ensuremath{\mathbf{A}}}

\newcommand{\bfC}{\ensuremath{\mathbf{C}}}
\newcommand{\bfB}{\ensuremath{\mathbf{B}}}
\newcommand{\bfD}{\ensuremath{\mathbf{D}}}

\usepackage{tabularx, environ, caption}
\makeatletter
\newcolumntype{\expand}{}
\long\@namedef{NC@rewrite@\string\expand}{\expandafter\NC@find}

\NewEnviron{problem}[2][]{%
  \def\problem@arg{#1}%
  \def\problem@framed{framed}%
  \def\problem@lined{lined}%
  \def\problem@doublelined{doublelined}%
  \ifx\problem@arg\@empty%
    \def\problem@hline{}%
  \else%
    \ifx\problem@arg\problem@doublelined%
      \def\problem@hline{\hline\hline}%
    \else%
      \def\problem@hline{\hline}%
    \fi%
  \fi%
  \ifx\problem@arg\problem@framed%
    \def\problem@tablelayout{|>{\itshape}lX|c}%
    \def\problem@title{\multicolumn{2}{|l|}{%
        \raisebox{-\fboxsep}{\textsc{#2}}%
      }}%
  \else
    \def\problem@tablelayout{>{\itshape}lXc}%
    \def\problem@title{\multicolumn{2}{l}{%
        \raisebox{-\fboxsep}{\textsc{\Large #2}}%
      }}%
  \fi%
  \smallskip\par\noindent%
  \begin{tabularx}{\columnwidth}{\expand\problem@tablelayout}%
    \problem@hline%
    \problem@title\\[2\fboxsep]%
    \BODY\\\problem@hline%
  \end{tabularx}%
  \medskip\par%
}
\makeatother

\title{Semantic Width and the Fixed-Parameter Tractability of \\Constraint Satisfaction Problems}

\author{
Hubie Chen$^1$ \and
Georg Gottlob$^{2,3}$ \and
Matthias Lanzinger$^{3}$ \And
Reinhard Pichler$^3$\\
\affiliations
$^1$Birkbeck, University of London\\
$^2$Oxford University\\
$^3$TU Wien\\
\emails
h.chen@dcs.bbk.ac.uk,
georg.gottlob@cs.ox.ac.uk,
\{mlanzing, pichler\}@dbai.tuwien.ac.at%
}

\begin{document}

\maketitle

\begin{abstract}

  Constraint satisfaction problems (CSPs) are an important formal framework for the uniform treatment of various prominent AI tasks, e.g., coloring or scheduling problems. Solving CSPs is, in general, known to be \np-complete and fixed-parameter intractable when parameterized by their constraint scopes.  We give a characterization of those classes of CSPs for which the problem becomes fixed-parameter tractable.
  Our characterization significantly increases the utility of the CSP framework by making it possible to decide the fixed-parameter tractability of problems via their CSP formulations.
  We further extend our characterization to the evaluation of unions of conjunctive queries, a fundamental problem in databases. Furthermore, we provide some new insight on the frontier of \ptime solvability of CSPs.
  In particular, we observe that bounded fractional hypertree width is more general than bounded hypertree width only for classes that exhibit a certain type of exponential growth.
  The presented work resolves a long-standing open problem and yields powerful new tools for complexity research in AI and database theory.

  \noindent
  \emph{This is an extended version of~\cite{DBLP:conf/ijcai/ChenGLP20}.}
\end{abstract}

\section{Introduction}

CSPs are a fundamental problem of artificial intelligence. As a unifying formal framework, they play a foundational role in many areas of AI research, see e.g.,~\cite{DBLP:journals/aim/Kumar92,DBLP:conf/aaai/Narvaez18}.
However, the unifying aspect of CSPs has not yet reached its full potential. While a CSP formulation of a problem allows for reuse of common algorithmic strategies and implementations~\cite{DBLP:journals/ai/GottlobLS00,DBLP:journals/ai/DoK01}, results in computational complexity still often require individual investigation, with little help from the framework. A complexity characterization for CSP would allow researchers to finally leverage the CSP framework also for strong computational complexity results, hence greatly simplifying the study of all problems that can be formulated as CSPs. The consequences and wide-reaching applications of such a characterization motivate our central research question. Note that throughout this paper, the parameterized complexity of CSPs always refers to the problem parameterized by the size of its constraint scopes.

\begin{myintro}
   \noindent
   \textbf{Research Challenge:} Is there a natural characterization of
   the fixed-parameter tractable classes of CSPs?
\end{myintro}

To be precise, we study what is referred to as the uniform CSP problem in the literature. In the uniform problem, we are interested in how the structure of constraint scopes affects the complexity of the problem, i.e., we characterize restrictions to the structure of constraint scopes. In the nonuniform problem, one considers restrictions to the constraint relations. Here, in a classic result, \cite{DBLP:journals/jct/HellN90} gave an elegant characterization of \ptime solvability. More recently,~\cite{DBLP:conf/focs/Bulatov17} and~\cite{DBLP:conf/focs/Zhuk17} were able to independently establish a powerful dichotomy theorem. However, results for the nonuniform case do not translate to the uniform problem.

There is a long line of research devoted to the (parameterized) computational complexity of solving CSPs based on structural parameters of their associated hypergraphs.
In a landmark result, \cite{DBLP:journals/jacm/Grohe07} resolved the question for a restricted class of CSPs; namely those with bounded arity.  There, \ptime decidability is fully characterized by bounded treewidth modulo homomorphic equivalence. Moreover, for bounded arity, we have fixed-parameter tractability if and only if the problem is solvable in \ptime.

To tackle the problem beyond bounded arity, a number of generalizations of treewidth have been developed that provide sufficient conditions for tractably solving CSPs, the most important of which are hypertree width~\cite{DBLP:journals/jcss/GottlobLS02} and fractional hypertree width~\cite{DBLP:journals/talg/GroheM14}. Yet, bounding these parameters yields only sufficient conditions for tractability. A necessary condition for unbounded arity remains elusive.  In the parameterized space, a highly impressive result by~\cite{DBLP:journals/jacm/Marx13} was able to characterize those hypergraphs, i.e., problem structures, that always allow for fixed-parameter tractable evaluation by the \emph{submodular width} of the hypergraphs. However, while this result is closely related to our goal, the fact that the characterization is on the hypergraph level significantly limits its applicability in our setting: When we consider the CSP formulation of a problem, then the complexity of our problem does not depend on the complexity of other, unrelated, CSPs that happen to have the same underlying hypergraphs. Hence, characterizing on the hypergraph level restricts us to a worst-case that may not be connected to the problem we want to study (this point is discussed in detail in Section~\ref{sec:gap}).

Despite their unquestionable importance, Grohe's and Marx's characterizations do not answer our research question. 
 Instead, we introduce a new parameter -- \emph{semantic submodular width} ($\ssubw$) -- to capture the minimal submodular width over the (infinite) equivalence class of semantically equivalent CSPs. We show that it is still possible to decide $\ssubw$ and find the minimal semantically equivalent CSP in time that depends only on the size of the parameter. Following that, we give a reduction from Marx's setting to ours, which allows us to prove the necessary lower bound. Akin to Marx's characterization, our result assumes the Exponential Time Hypothesis~\cite{DBLP:journals/jcss/ImpagliazzoPZ01}; a standard assumption of parameterized complexity.

\begin{myintro}
  \noindent
  \textbf{Main Result 1:}  Assuming the Exponential Time Hypothesis, a class of CSPs is fixed-parameter tractable if and only if it has bounded semantic submodular width.
\end{myintro}

Through the well-known equivalence of CSP to the homomorphism problem as well as conjunctive query containment~\cite{DBLP:journals/jcss/KolaitisV00} and evaluation~\cite{DBLP:books/cs/Maier83}, our main result also applies to those important problem families. By adapting our notion of $\ssubw$ from CSPs to the more general notion of \emph{unions of conjunctive queries} (UCQs) accordingly, we can also extend our characterization result to UCQs, an important and widely studied class of query languages in database theory~\cite{DBLP:journals/jacm/SagivY80,DBLP:journals/jacm/AtseriasDK06}.

\begin{myintro}
  \noindent
  \textbf{Main Result 2:}  Assuming the Exponential Time Hypothesis, a class of UCQs is fixed-parameter tractable if and only if it has bounded semantic submodular width.
\end{myintro}

With the question of fixed-parameter tractability resolved, we shift our attention to \ptime solvable classes of CSPs. Here, a characterization of tractable restrictions for the uniform CSP problem remains an open question. We briefly discuss how our parameterized results relate to the non-parameterized case. 
Furthermore, we utilize some recent results on the connection of hypergraph width parameters and Vapnik-Chervonenkis dimension to derive new insight on the frontier of tractability of the uniform CSP problem. In particular, we show that the two most important sufficient conditions for tractable CSP solving -- bounded fractional hypertree width and bounded hypertree width -- actually collapses for classes of CSPs as long as they do not exhibit a certain kind of, seemingly unnatural, exponential growth.

The rest of the paper is structured as follows. Section~\ref{sec:prelim} recalls necessary definitions for constraint satisfaction problems, unions of conjunctive queries, and relevant hypergraph width parameters. We expand on the differences to Marx's characterization in Section~\ref{sec:gap} before we present our two main results in Sections~\ref{sec:csp} and~\ref{sec:ucqs}. Section~\ref{sec:ptime} presents some new insights regarding the \ptime solvability of CSPs.
We end with concluding remarks in Section~\ref{sec:conclusion}. Moreover, we include an appendix that includes full proofs of all statements that are not already shown in the main body of text. In Appendix~\ref{sec:shw} we show that semantic hypertree width behaves differently than the other widths investigated in our setting and prove a characterization in terms of semantic generalized hypertree width.

\section{Preliminaries}
\label{sec:prelim}

\subsection{Parameterized Complexity}
\label{sec:prelimcomplexity}

Parameterized complexity enables a more fine-grained study of
computational complexity. Here, we give an abridged definition of the
notions necessary for this paper. For full definitions
and details we refer to~\cite{DBLP:series/txtcs/FlumG06}.

For an alphabet of symbols $\Sigma$, a
\emph{parameterized problem} is given as a pair $(P, \kappa)$ of a
problem $P \subseteq \Sigma^*$ and its parameterization $\kappa$ that
maps each string in $\Sigma^*$ to a parameter.

We say that a parameterized problem $(P, \kappa)$ is
\emph{fixed-parameter tractable} if there exists an algorithm that
decides whether a given string $x \in \Sigma^*$ is in $P$ in time
$f(\kappa(x))poly(|x|)$, where $f$ is a computable function and $poly$
is a polynomial. 

Let $(P, \kappa)$ and $(P', \kappa')$ be two parameterized problems.
A \emph{fpt-reduction} from $(P, \kappa)$ to $(P', \kappa')$ is a mapping
$R : \Sigma^* \to \Sigma^*$ with the following properties:
\begin{enumerate}[topsep=0pt,noitemsep,label=(\arabic*)]
\item $x \in P \iff R(x) \in P'$ for every $x \in \Sigma^*$,
\item $R$ is computable in time $f(\kappa(x))poly(|x|)$ ($f$ is computable), and
\item there is a computable function $g$ such that $\kappa'(x) \leq g(\kappa(x))$ for all $x \in \Sigma^*$.
\end{enumerate}
We say $(P, \kappa)$ is fpt-reducible to $(P', \kappa')$, denoted $(P, \kappa) \leq (P',\kappa')$.
The class of fixed-parameter tractable problems is closed under fpt-reductions.

Our main results assume the Exponential Time Hypothesis, which states
that 3-SAT with $n$ variables can not be decided in $2^{o(n)}$
time~\cite{DBLP:journals/jcss/ImpagliazzoPZ01}. This is a standard
assumption of parameterized complexity theory.

\subsection{Constraint Satisfaction Problems}
\label{sec:prelimcsps}

We formalize CSPs as a relational homomorphism problem. A \emph{signature} is a finite set of relation symbols with associated arities. A \emph{(relational) structure} $\bfA$ (over signature $\sigma$) consists of a domain $A$ and an interpretation $R^\bfA$ for each relation symbol $R$ in the signature. 
  Let $\bfA, \bfB$ be relational structures, we call a function $h: A \to B$
a \emph{homomorphism} from $\bfA$ into $\bfB$, if for every relation symbol $R$ and all $(x_1, \dots, x_{ar(R)}) \in R^{\bfA}$ also
$(h(x_1), \dots, h(x_{ar(R)})) \in R^{\bfB}$, where $ar(R)$ is the arity of $R$. We write $|\bfA| = |\sigma|+|A|+\sum_{R\in \sigma}|R^\bfA| ar(R)$ for the size of structure $\bfA$.

We call an ordered pair $(\bfA, \bfB)$ of structures a \emph{constraint satisfaction problem instance}. Intuitively, $\bfA$ expresses the constraint scopes and $\bfB$ the permitted assignments for each constraint scope. For a class $\classA$ of structures, the corresponding \emph{constraint satisfaction problem}, denoted $\csp(\classA)$, is the following decision problem.
\begin{problem}[framed]{$\csp(\classA)$}
  Instance: & A CSP instance $(\bfA,\bfB)$
  where $\bfA \in \mathcal{A}$. \\
  Question: & Is there a homomorphism from $\bfA$ into $\bfB$?
\end{problem}
By slight abuse of notation, we also call $\classA$ a class of \emph{constraint satisfaction problems}.
Note that what we call $\csp(\classA)$ is sometimes denoted as $\csp(\classA, -)$ to emphasize that we are dealing with the uniform CSP problem (cf.,~\cite{DBLP:journals/jacm/Grohe07}). Furthermore, constants play no role in our considerations since they can be eliminated by straightforward preprocessing.

A \emph{hypergraph} $H$ is a tuple $(V(H), E(H))$, where $V(H)$ is the set of vertices and $E(H)\subseteq 2^{V(H)}$ the set of \emph{hyperedges}. 
For a set $U \subseteq V(H)$, we define the subhypergraph induced by $U$ as $H[U] = (U, E')$ where $E' = \{e \cap U \mid e \in E(H)\} \setminus \{\emptyset\}$.
The hypergraph $H(\bfA)$ of a structure $\bfA$ is the hypergraph where the vertices equal $A$ and $e \in E(H(\bfA))$ if and only if there exists some relation symbol $R$ such that some permutation of $e$ is contained in $R^\bfA$. The hypergraph of a CSP instance $(\bfA, \bfB)$ is the hypergraph of $\bfA$, i.e., the hypergraph of a CSP instance represents only the structure of its constraint scopes. 
We are interested in how this structure affects the complexity of the decision problem. We thus consider the $\csp$ decision problem parameterized by its constraint scope structure:
\begin{problem}[framed]{$\pcsp(\classA)$}
  Instance: & A CSP instance $(\bfA,\bfB)$
  where $\bfA \in \mathcal{A}$. \\
  Parameter: & $|\bfA|$ \\
  Question: & Is there a homomorphism from $\bfA$ into $\bfB$?
\end{problem}

For a class $\classH$ of hypergraphs, let $Struct[\classH]$ denote all structures whose hypergraphs are in $\classH$. We will abbreviate the problem $\csp(Struct[\classH])$ to $\csp(\classH)$, i.e., \csp restricted to those instances whose hypergraphs are in $\classH$. The analogue applies to $\pcsp$.

For two structures $\bfA$ and $\bfA'$, we say $\bfA$ is
\emph{homomorphically equivalent} to $\bfA'$, or $\bfA \simeq \bfA'$,
if there exists a homomorphism from $\bfA$ into $\bfA'$ and vice versa. The \emph{core} of a structure $\bfA$, denoted $\core(\bfA)$,  is the minimal structure (with regards to the number of tuples) that is homomorphically equivalent to $\bfA$. It is not hard to verify that every structure has a unique (up to isomorphism) core. For a class of structures $\classA$, we write $\core(\classA)$ for the class of cores of structures in $\classA$.

In the context of CSPs, we say a structure $\bfA$ is \emph{contained} in structure $\bfA'$ if for every $\bfB$ we have
that if $(\bfA, \bfB)$ has a solution, then $(\bfA', \bfB)$ also has a solution. It is easy to see that $\bfA$ is contained in $\bfA'$ if and only if there exists a homomorphism from $\bfA'$ to $\bfA$. If two structures $\bfA$ and $\bfA'$ are contained within each other, we say that they are \emph{semantically equivalent} (we write $\bfA \equiv \bfA'$). Hence, if $\bfA \equiv \bfA'$ then  
for every $\bfB$ we have that  $(\bfA, \bfB)$ has a solution if and only if $(\bfA', \bfB)$ has a solution.
Furthermore, note that $\bfA \equiv \bfA'$ if and only if $\bfA \simeq \bfA'$, i.e., homomorphic equivalence equals semantic equivalence. In particular, $(\bfA, \bfB)$ is always equivalent to $(core(\bfA), \bfB)$.

\subsection{Unions of Conjunctive Queries}
\label{sec:prelimucq}

Please note that, for consistency and brevity, we will define unions of conjunctive queries via CSPs. This does not match the standard presentations of the problem but is equivalent to them.

An instance of the \emph{(boolean) unions of conjunctive queries} (UCQ) problem is a set of structures $\{\bfA_1, \dots, \bfA_n\}$, we write $\bigcup_{i=1}^n\bfA_i$, and a structure $\bfB$ which is usually referred to as the database.
We say an instance of the UCQ problem $(\bigcup_{i=1}^n\bfA_i, \bfB)$ has a solution if any of the CSP instances $(\bfA_i, \bfB)$, for $1 \leq i \leq n$, has a solution. Hence, the accompanying parameterized decision problem for a class of UCQs $\mathcal{U}$ is the following
 
\begin{problem}[framed]{$\bucq(\mathcal{U})$}
  Instance: & A UCQ $U = \bigcup_{i=1}^n \bfA_i$ where $U \in \mathcal{U}$ and a database $\bfB$. \\
  Parameter: & $\sum_{i=1}^n |\bfA_i|$ \\ 
  Question: & Does $U, \bfB$ have a solution?
\end{problem}

Analogue to CSPs, the equivalence of UCQs will be important.
We say that two UCQs $U=\bigcup_{i=1}^n \bfA_i$ and $U'=\bigcup_{i=1}^m \bfA'_i$ are \emph{semantically equivalent} (we write $U \equiv U'$)
if for every structure $\bfB$, $(U, \bfB)$ has a solution if and only if $(U', \bfB)$ has a solution.

A UCQ $\bigcup_{i=1}^n \bfA_i$ is \emph{non-redundant} if there are no $\bfA_i$ and $\bfA_j$ ($i\neq j$) such that $\bfA_i$ is contained in $\bfA_j$. Note that every UCQ can be transformed into an equivalent non-redundant UCQ by repeated deletion of structures that are contained by other structure in the UCQ~\cite{DBLP:journals/jacm/SagivY80}). We write $nr(U)$ for the UCQ obtained by applying this procedure to make an UCQ $U$ non-redundant. Importantly, as the procedure only deletes structures we have $nr(U) \subseteq U$.
\subsection{Decompositions and Their Widths}
\label{sec:widths}
In this work we will only consider width notions that are based on tree decompositions.
A tuple $(T, (B_u)_{u \in T})$ is a \emph{tree
  decomposition} of a hypergraph $H$ if $T$ is a tree, every $B_u$
 is a subset of $V(H)$ and the following two conditions are satisfied:
\begin{enumerate}[topsep=0pt, noitemsep,label=(\arabic*)]
\item 
 For every $e \in E(H)$ there
 is a node $u \in T$ s.t. $e \subseteq B_u$, and
\item for every vertex $v \in V(H)$,
  $\{u \in T \mid v \in B_u\}$ is connected in $T$.
\end{enumerate}
For functions $f\colon 2^{V(H)} \to \mathbb{R}^+$, the
\emph{$f$-width} of a tree decomposition is
$\sup\{f(B_u) \mid u \in T\}$ and the $f$-width of a hypergraph is the
minimal $f$-width over all its tree decompositions.  Let $\mathcal{F}$
be a class of functions from subsets of $V(H)$ to the non-negative
reals, then the $\mathcal{F}$-width of $H$ is
$\sup \{f\mbox{-width}(H)\mid f\in \mathcal{F} \}$. All such
widths are implicitly extended to structures and CSP instances by taking the
width of their respective hypergraphs.

The following properties of functions $f\colon 2^{V(H)}\to \mathbb{R}^+$ are important: 
\begin{itemize}[topsep=0pt,noitemsep]
\item $f$ is \emph{monotone} if $X \subseteq Y$ implies $f(X) \leq f(Y)$.
\item $f$ is called \emph{edge-dominated} if $f(e) \leq 1$ for every $e \in E(H)$. 
\item $f$ is called \emph{submodular} if $f(X) + f(Y) \geq f(X \cap Y) + f(X \cup Y)$ holds
for every $X, Y \subseteq V(H)$. 
\end{itemize}
We say a weight function $\gamma: E(H) \to \mathbb{R}^+$ is a \emph{fractional edge cover}
of a set $X \subseteq V(H)$ if for every $v \in X$ we have $\sum_{e \in I_v} \gamma(e) \geq 1$ where $I_v$
is the set of all edges incident to $v$. If we restrict the co-domain to $\{0,1\}$, we obtain the definition of an \emph{integral edge cover}. We refer to the total weight  $\sum_{e \in E(H)} \gamma(e)$ of an edge cover as
the \emph{size} of the edge cover.

For $X \subseteq V(H)$, let $\rho_H(X)$ be the
size of the smallest integral edge cover of $X$ by edges in $E(H)$ and
$\rho_H^*(X)$ the size of the smallest fractional edge cover of $X$ by edges in $E(H)$.
This framework now allows us to define many of the important widths in the current literature.
\begin{description}[noitemsep,topsep=0pt,leftmargin=!]
\item[(Primal) Treewidth of $H$~\cite{DBLP:journals/jal/RobertsonS86}:]
  $tw(H) := c$-width, where $c(X) = |X|-1$.
\item[Generalized hypertree width of $H$~\cite{DBLP:journals/jcss/GottlobLS02}:]  
  $ghw(H):=\rho_H$-width.
\item[Fractional hypertree width of $H$~\cite{DBLP:journals/talg/GroheM14}:]
  $\fhw(H) := \rho_H^*$-width.
\item[Submodular width of $H$~\cite{DBLP:journals/jacm/Marx13}:]
  $subw(H) := \mathcal{F}$-width$(H)$,
  where $\mathcal{F}$ is the set of all monotone, edge-dominated,
  submodular functions 
  $b$ on $2^{V(H)}$ with $b(\emptyset)=0$.    
\end{description}
A notable omission, that is not expressible through this notion of
$f$-width, is \emph{hypertree width (hw)}~\cite{DBLP:journals/jcss/GottlobLS02}, which uses the same width function
as $ghw$ but imposes an additional restriction on the
tree decomposition. Details of hypertree width are not important for the main part of this paper. We formally introduce them in Appendix~\ref{sec:shw} where we present some novel results on the behaviour of hypertree width in homomorphically equivalent structures.
Note that these widths spawn a hierarchy in the sense that the following inequality holds for all hypergraphs $H$:
\[
  subw(H) \leq \fhw(H) \leq ghw(H) \leq hw(H) \leq tw(H)+1
\]
For a class of structures $\classA$, we say $\classA$ has bounded width if there exists a constant $k$ such that every structure in $\classA$ has width $\leq k$.
The computational complexity of CSP is tightly linked
to this hierarchy of parameters. This connection is summarized by the
following two propositions.

\begin{proposition}[\cite{DBLP:journals/talg/GroheM14}]
  \label{prop:fhw}
  Let $\classC$ be a class of CSP instances of bounded $\fhw$. Then $\csp(\classC)$ is tractable.
\end{proposition}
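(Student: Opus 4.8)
The plan is to establish polynomial-time solvability by dynamic programming over a tree decomposition whose $\rho_H^*$-width is bounded by the constant $k$ that bounds the $\fhw$ of the class $\classC$. Fix an instance $(\bfA, \bfB)$, write $H = H(\bfA)$, and let $(T, (B_u)_{u \in T})$ be a tree decomposition of $H$ with $\rho_H^*(B_u) \leq k$ for every node $u$. Such a decomposition exists by the assumption of bounded $\fhw$; obtaining one algorithmically is treated separately below.

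The crucial ingredient is the fractional edge cover bound. For a bag $B_u$, let $H_u$ be the set of partial assignments $\alpha\colon B_u \to B$ that are consistent with all constraints of $\bfA$ covering the vertices of $B_u$. Because $\rho_H^*(B_u) \leq k$, the fractional cover bound of Grohe and Marx (in the spirit of the Atserias--Grohe--Marx / Friedgut--Kahn inequality) gives $|H_u| \leq |\bfB|^{\rho_H^*(B_u)} \leq |\bfB|^k$: taking a fractional edge cover $\gamma$ of $B_u$ of size $\rho_H^*(B_u)$, the number of relevant tuples over $B_u$ is at most $\prod_e |R^\bfB_e|^{\gamma(e)} \leq |\bfB|^{\sum_e \gamma(e)}$. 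Moreover $H_u$ can be computed within roughly this bound (for instance via a worst-case-optimal join of the relevant relations), i.e.\ in time polynomial in $|\bfB|$ with $k$ in the exponent. This is precisely what keeps the per-bag tables polynomially sized, in contrast to the trivial bound $|\bfB|^{|B_u|}$.

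With these tables in hand, I would root $T$ and run the standard bottom-up dynamic programming used for bounded-treewidth CSPs: at each node $u$ maintain the subset $S_u \subseteq H_u$ of partial assignments that extend to a homomorphism on the subtree rooted at $u$. The connectedness condition of the tree decomposition guarantees that consistency only needs to be enforced on the vertices shared between a node and its children, so $S_u$ is obtained from the $S_c$ of its children by a sequence of semijoin and projection operations on tables of size at most $|\bfB|^k$, each of which is polynomial. The instance admits a homomorphism iff $S_r \neq \emptyset$ at the root $r$, and the overall running time is polynomial in $|\bfB|$ and $|\bfA|$, with the degree controlled by $k$.

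The main obstacle is not the dynamic programming but rather (i) the fractional edge cover bound itself, which is the genuinely new content---it is what converts a bounded fractional cover into a \emph{polynomial} bound on the number of partial solutions---and (ii) computing a decomposition of bounded $\rho_H^*$-width in polynomial time, since deciding the exact $\fhw$ is \np-hard. For (ii) the resolution is a polynomial-time approximation algorithm that, whenever $\fhw(H) \leq k$, returns a tree decomposition of $\rho_H^*$-width bounded by some function of $k$ (e.g.\ $O(k^3)$); the argument above only needs a decomposition of constant width, so an approximate one suffices and affects only the constant in the exponent.
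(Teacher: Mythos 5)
The paper imports this proposition from Grohe and Marx without giving its own proof, and your argument is a correct reconstruction of exactly the standard one from that source: bound each bag's table of partial solutions by $|\bfB|^{k}$ via the fractional edge cover (AGM) inequality, run bottom-up dynamic programming over the decomposition, and handle the fact that an optimal decomposition cannot be computed exactly by invoking Marx's polynomial-time $O(k^3)$-approximation for $\fhw$. Nothing to add.
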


\begin{proposition}[\cite{DBLP:journals/jacm/Marx13}]
  \label{prop:subw}
  Let $\classH$ be a recursively enumerable class of hypergraphs. Assuming the Exponential Time Hypothesis, $\pcsp(\classH)$ is fixed-parameter tractable if and only if $\classH$ has bounded submodular width.
\end{proposition}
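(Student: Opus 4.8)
The plan is to prove the two directions of the equivalence separately. Only the hardness direction genuinely needs the Exponential Time Hypothesis, and it is there that the real difficulty lies.

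\textbf{From bounded submodular width to fixed-parameter tractability.} Suppose every hypergraph in $\classH$ has $subw \leq w$ for a fixed constant $w$. The obstacle to directly invoking Proposition~\ref{prop:fhw} is that an individual instance with hypergraph $H$ may have $\fhw(H)$ much larger than $subw(H)$, so no single tree decomposition keeps the number of partial solutions small. The plan is to follow Marx's \emph{splitting} technique: given $(\bfA,\bfB)$ with $H=H(\bfA)$, split each constraint relation into polynomially many subrelations according to the high-/low-degree pattern of their projections, obtaining a family of at most $f(|\bfA|)$ instances whose disjunction is equivalent to the original. The gain is that on each split instance the entropy-type submodular function induced by the (now degree-uniform) relations is controlled, so the decomposition guaranteed by $subw(H)\le w$ has only polynomially many partial solutions per bag --- precisely the quantity that $\fhw$ bounds via the fractional-cover (AGM) inequality. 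Dynamic programming along that decomposition then solves each split instance in polynomial time, extending the mechanism of Proposition~\ref{prop:fhw}, and since the number of splits depends only on $|\bfA|$, the overall algorithm runs in fpt time.

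\textbf{From unbounded submodular width to hardness.} For the converse I would argue by contraposition: assuming $\classH$ has unbounded submodular width, I would build an fpt-reduction from $3$-SAT that, under ETH, rules out fixed-parameter tractability of $\pcsp(\classH)$. The engine of the reduction is an \emph{embedding theorem}: a hypergraph of large submodular width admits a high-capacity, solution-preserving embedding of an arbitrary CSP (equivalently, of a canonical hard instance such as a grid-tiling or clique problem) into an instance whose hypergraph lies in $\classH$, with tightly controlled size blow-up. Concretely, I would take an ETH-hard family routed through a high-treewidth graph problem, embed it using hypergraphs of growing submodular width drawn from $\classH$ (available because $\classH$ is recursively enumerable), and verify that a hypothetical $f(|\bfA|)\,\|I\|^{O(1)}$ algorithm for $\pcsp(\classH)$ would yield a $2^{o(n)}$ algorithm for $3$-SAT, contradicting ETH.

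The main obstacle is the embedding theorem underlying the hardness direction. Establishing it requires converting the purely combinatorial parameter $subw(H)$ --- defined through a supremum over all monotone, edge-dominated submodular functions --- into a concrete and efficiently computable embedding of a hard problem. This rests on the deeper structural characterization of submodular width in terms of highly connected (well-linked) sets and fractional separators, together with a careful counting argument controlling the blow-up so that the ETH threshold is preserved across the reduction. By contrast, the algorithmic direction is comparatively routine once the splitting reduction to degree-uniform, small-partial-solution instances is set up, since each piece is then handled by the same dynamic programming that proves Proposition~\ref{prop:fhw}.
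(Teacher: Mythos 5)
This proposition is not proved in the paper at all: it is imported verbatim from Marx (2013), so there is no in-paper argument to compare against. Your sketch faithfully reproduces the architecture of Marx's actual proof --- the splitting of relations into degree-uniform parts so that an entropy-like submodular function certifies polynomially many partial solutions per bag for the algorithmic direction, and the embedding theorem built on highly connected sets and fractional separators for the ETH-based lower bound --- and contains no wrong turns. Be aware, though, that as written it is an outline rather than a proof: the two ingredients you name (the splitting lemma and the embedding theorem) constitute essentially all of the technical content of Marx's sixty-page argument, and neither is established here; for the purposes of this paper it suffices to cite the result, as the authors do.
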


\section{Main Results}
\label{sec:main}

\subsection{Characterization of Hypergraph Classes vs. Classes of CSP Instances}
\label{sec:gap}

Recall the motivation given in the introduction. Many AI problems have natural CSP formulations and
we wish to determine the computational complexity of all such problems through a characterization of the complexity of \csp.
In this section we argue why a characterization on the hypergraph level (which ignores relation symbols), as in Proposition~\ref{prop:subw}, is not enough for this goal.
The main issue with the hypergraph characterization is that even though a CSP instance may have a highly complex hypergraph structure, it can still be easy to solve. Yet, the complexity of $\pcsp(\classH)$ expresses only the complexity of the worst-case CSP instances of the given structure. We illustrate this issue in the following example.

Consider the following problem: Given a directed graph $G$, can we embed (by a homomorphism) a bidirected $\nxn$-grid into $G$? The corresponding CSP instance $C_n(G)=(\bfA, \bfB)$ has a single relation symbol $E$ and $\bfB = G$. As domain of $\bfA$ we take $\{x_{i,j} \mid i\in [n], j\in [n]\}$ and $E^\bfA$ contains exactly the following tuples specifying the $\nxn$-grid: $(x_{i,j}, x_{i+1,j}), (x_{i+1,j}, x_{i,j})$ for $i \in [n-1], j\in[n]$ and $(x_{i,j}, x_{i,j+1})$, $(x_{i,j+1}, x_{i,j})$ for $i \in [n], j\in[n-1]$.

We now consider the class $\classC$ of all CSP instances $C_n(G)$ for
$n \geq 1$ and all graphs $G$. The hypergraphs of $\classC$ are, by
definition, exactly the class of $n\times n$-grid graphs $\classGrid$,
which is well-known to have unbounded treewidth~\cite{DBLP:journals/jal/RobertsonS86}. In general, it is
difficult to determine the submodular width of graphs since the
definition depends on a supremum over an infinite class of
functions. However, Lemma~\ref{lem:twsubw} below provides us with a
convenient way to recognize that certain classes have unbounded
submodular width.

\begin{lemma}
  \label{lem:twsubw}
  Let $H$ be an arbitrary hypergraph and let $rank(H)$ be the maximum edge size in $H$, then
  \[tw(H) \leq rank(H) \cdot subw(H)\]
\end{lemma}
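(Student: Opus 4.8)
The plan is to lower-bound $subw(H)$ by exhibiting a single explicit function in the class $\mathcal{F}$ of monotone, edge-dominated, submodular functions $b$ with $b(\emptyset)=0$, chosen so that its $b$-width coincides, up to the factor $rank(H)$, with (one plus) the treewidth. The natural candidate is the \emph{normalized cardinality function} $b^*(X) := |X|/rank(H)$ for $X \subseteq V(H)$.

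First I would verify that $b^* \in \mathcal{F}$. Monotonicity is immediate, since $X \subseteq Y$ implies $|X| \le |Y|$. Edge-domination holds because every hyperedge $e \in E(H)$ satisfies $|e| \le rank(H)$ by definition of $rank$, so $b^*(e) \le 1$. Submodularity (in fact modularity) follows from the identity $|X \cap Y| + |X \cup Y| = |X| + |Y|$, and clearly $b^*(\emptyset) = 0$. Hence $b^*$ is admissible, and since $subw(H)$ is defined as the supremum of $b\mbox{-width}(H)$ over all $b \in \mathcal{F}$, we immediately obtain $subw(H) \ge b^*\mbox{-width}(H)$.

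The key computation is then that $b^*\mbox{-width}$ is merely a rescaling of the quantity optimized by treewidth. For any fixed tree decomposition $(T, (B_u)_{u \in T})$ of $H$ we have $\max_{u \in T} b^*(B_u) = \tfrac{1}{rank(H)} \max_{u \in T} |B_u|$, and minimizing over all tree decompositions — which are exactly the same objects that define $tw(H)$ — yields $b^*\mbox{-width}(H) = \tfrac{1}{rank(H)}\min_{(T,(B_u))} \max_{u \in T} |B_u| = \tfrac{tw(H)+1}{rank(H)}$, using $c(X) = |X|-1$. Combining this with $subw(H) \ge b^*\mbox{-width}(H)$ gives $rank(H)\cdot subw(H) \ge tw(H)+1$, which is slightly stronger than the claimed bound $tw(H) \le rank(H)\cdot subw(H)$.

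I do not anticipate a genuine obstacle: the entire argument rests on spotting that normalized cardinality is an admissible submodular function and that it differs from the treewidth weight $X \mapsto |X|-1$ only by an affine rescaling. The one point that deserves a sentence of care is that the minimization over tree decompositions is identical on both sides, so that passing between $tw(H)$ and $b^*\mbox{-width}(H)$ neither adds nor removes any candidate decomposition; everything else is the elementary cardinality identity.
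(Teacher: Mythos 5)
Your proposal is correct and follows essentially the same route as the paper's own proof: both use the normalized cardinality function $X \mapsto |X|/rank(H)$, verify it is monotone, edge-dominated, and (sub)modular with value $0$ on $\emptyset$, and observe that its width is exactly $(tw(H)+1)/rank(H)$, yielding the bound. Your remark that the argument actually gives the slightly stronger inequality $tw(H)+1 \leq rank(H)\cdot subw(H)$ is accurate and is implicit in the paper's sketch as well.
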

\begin{proof}[Sketch]
  Let $f: X \mapsto |X|/rank(H)$ be a function on the subsets of
  $V(H)$. It is easy to verify that $f$ is submodular, edge-dominated
  and monotone. For any node $u$ of any tree decomposition of $H$ we
  clearly have $|B_u| = rank(H) \cdot f(B_u)$ and therefore also
  $tw(H)+1 = rank(H) \cdot f$-width$(H)$.  Since $f$ is submodular,
  edge-dominated and monotone we also have $f$-width$(H) \leq subw(H)$
  and the statement follows immediately.
\end{proof}

From Lemma~\ref{lem:twsubw} we can conclude that $\classGrid$ also has unbounded submodular width.
From Proposition~\ref{prop:subw} we can thus only deduce that $\pcsp(\classGrid)$ is not fixed-parameter tractable.

However, for every $C_n(G) = (\bfA, \bfB)$, we have that $\core(\bfA)$ is the structure with domain $\{x_1, x_2\}$ and $E^{\core(\bfA)} = \{(x_1, x_2), (x_2, x_1)\}$. This is easy to verify, e.g., by observing that an undirected $\nxn$-grid is 2-colorable. Clearly, $(\core(\bfA), \bfB)$ is solvable in polynomial time and it is equivalent to $(\bfA, \bfB)$. It follows that $\pcsp(\classC)$ is in fact fixed-parameter tractable (and indeed tractable), despite the complexity of $\pcsp(\classGrid)$. We see that a hypergraph level characterization has inherent shortcomings in  establishing lower bounds for specific problem classes.

\subsection{Constraint Satisfaction Problems}
\label{sec:csp}

In this section we prove our characterization theorem for CSPs. 
The discussion in Section~\ref{sec:gap} shows that unbounded submodular width can still allow for fixed-parameter tractable \csp solving. Hence, we require a new, more general, property to fully capture fixed-parameter tractability. We follow~\cite{DBLP:journals/sigmod/BarceloPR17}
who introduced the notion of \emph{semantic generalized hypertree width} and define the following general notion of \emph{semantic widths} of CSPs. 

\begin{definition} Let $\classA$ be the class of all structures and $w: \classA \to \mathbb{R}^+$ be invariant under isomorphism. We define \emph{semantic $w$} as
  $\semantic{w}(\bfA) := \inf \{ w(\bfA') \mid \bfA' \equiv \bfA \}$.
\end{definition}

Using this definition, we are now ready to
state our first main result.
We show that the characterization from Proposition~\ref{prop:subw} can indeed be strengthened to the following characterization of the fixed-parameter tractability of CSP instances.

\begin{theorem}
  \label{thm:main}
  Let $\classA$ be a recursively enumerable class of CSPs. Assuming the Exponential Time Hypothesis, $\pcsp(\classA)$ is fixed-parameter tractable if and only if $\classA$ has bounded semantic submodular width.
\end{theorem}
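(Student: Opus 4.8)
For the **easy (tractability) direction**, I would show that bounded $\ssubw$ implies fixed-parameter tractability. Suppose $\classA$ has bounded semantic submodular width, say $\ssubw(\bfA) \leq k$ for every $\bfA \in \classA$. By definition of $\ssubw$, for each $\bfA$ there exists a semantically equivalent structure $\bfA'$ with $subw(\bfA') \leq k$. The key idea is that since $(\bfA, \bfB)$ has a solution iff $(\bfA', \bfB)$ has one, we may solve the instance by first computing such a witness $\bfA'$ and then applying Marx's fpt-algorithm from Proposition~\ref{prop:subw} to the instance $(\bfA', \bfB)$. The main subtlety here is that we need to compute $\bfA'$ within the allowed fpt time budget — i.e., within time $f(|\bfA|)\cdot poly(|\bfB|)$. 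This is exactly where the claim announced in the introduction, that one can "decide $\ssubw$ and find the minimal semantically equivalent CSP in time that depends only on the size of the parameter," is needed: since $\bfA'$ can be taken to depend only on $\bfA$ (its size is bounded in terms of $|\bfA|$, as semantically equivalent minimal structures do not grow), computing it costs only $f(|\bfA|)$, which is absorbed into the parameter-dependent factor.

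For the **hard (lower-bound) direction**, I would prove the contrapositive: if $\classA$ has unbounded $\ssubw$, then $\pcsp(\classA)$ is not fixed-parameter tractable (under ETH). The natural strategy, following the introduction's remark about giving "a reduction from Marx's setting to ours," is to extract from $\classA$ a witness family of hard instances and reduce Marx's hypergraph-level hardness (Proposition~\ref{prop:subw}) to it. Concretely, unbounded $\ssubw$ means that for every $k$ there is some $\bfA \in \classA$ such that \emph{every} structure semantically equivalent to $\bfA$ — in particular its core — has submodular width exceeding $k$. Since semantic equivalence coincides with homomorphic equivalence and every structure is equivalent to its core, the cores of structures in $\classA$ form a family whose hypergraphs have unbounded submodular width. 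I would then aim to apply Proposition~\ref{prop:subw} to this class of hypergraphs (after checking recursive enumerability is preserved) to obtain a family on which the parameterized problem is W[1]-hard / not fpt, and transfer this hardness back to $\pcsp(\classA)$ via an fpt-reduction that maps a hard instance on the hypergraph level to a CSP instance whose structure lies in $\classA$.

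The **main obstacle** will be the lower-bound direction, and specifically making the reduction from Marx's setting genuinely go through at the level of \emph{structures} rather than hypergraphs. Marx's theorem yields hardness for $\pcsp(\classH)$ where one may choose the relation $\bfB$ adversarially across \emph{all} instances with hypergraph in $\classH$; but in our setting the structure $\bfA$ (not merely its hypergraph) is fixed by membership in $\classA$, so the hardness instances Marx constructs on the abstract hypergraph must be realizable using the specific relational structure of the cores in $\classA$. The crux is therefore to argue that having a core of large submodular width is not just necessary but \emph{sufficient} to embed Marx's hard sub-instances — i.e., that the combinatorial lower-bound construction depends only on properties of the hypergraph that survive passage to the core. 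I expect this to require a careful analysis showing that the core's submodular width controls the complexity, together with an argument that the reduction's output structures can be forced to lie in $\classA$ (or that working with $\core(\classA)$ suffices, via the equivalence of $(\bfA,\bfB)$ and $(\core(\bfA),\bfB)$). Handling the recursive-enumerability hypotheses cleanly throughout, so that both Proposition~\ref{prop:subw} and the computability requirements of an fpt-reduction apply, is the remaining technical point I would need to verify.
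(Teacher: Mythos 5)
Your overall architecture matches the paper's: the upper bound goes through replacing $\bfA$ by its core (justified by Lemma~\ref{lem:swidth}) and invoking Proposition~\ref{prop:subw}, and the lower bound is meant to come from an fpt-reduction of Marx's hypergraph-level hardness into $\pcsp(\classA)$. The tractability direction is essentially complete as you state it. But in the lower-bound direction you have correctly located the crux and then left it unfilled: the entire technical content of that direction is the construction of the fpt-reduction $\pcsp(\classH^{\core(\classA)}) \leq \pcsp(\classA)$, and your proposal only names it as something you ``would need to verify.'' Moreover, the route you sketch for filling it --- arguing that Marx's hard sub-instances can be ``embedded'' using the specific relational structure of the cores, i.e.\ re-analyzing Marx's lower-bound construction --- is not how the paper proceeds and would be considerably harder. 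The paper's reduction is black-box in the right-hand structure: it never looks at which instances Marx's proof actually produces.

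The two missing ingredients are as follows. First, one passes to the expansion $\classA^*$ in which every domain element $a$ gets a fresh singleton unary relation $U_a$; this pins each variable of $\bfA^*$ to ``itself'' and is what lets an arbitrary hypergraph-level instance $(\bfC,\bfD)$ with $H(\bfC)=H(\bfA)$ be encoded into a fixed structure $\bfA^*$. Concretely (Lemma~\ref{lem:redh}), one builds a target structure $\bfB$ over the product domain $A\times D$, setting $U_a^{\bfB}=\{(a,d)\mid d\in D\}$ and populating each $R^{\bfB}$ with tuples $((a_1,f(a_1)),\dots,(a_k,f(a_k)))$ where $f$ ranges over the assignments satisfying \emph{all} tuples of $\bfC$ whose scope is the hyperedge $\{a_1,\dots,a_k\}$; homomorphisms $\bfA^*\to\bfB$ are then forced to have the form $a\mapsto(a,h(a))$ with $h$ a homomorphism $\bfC\to\bfD$. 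Second, one needs the result of Chen and Mengel (Proposition~\ref{prop:star}) that $\pcsp(\core(\classA)^*)\leq\pcsp(\classA)$, which removes the artificially added constants and returns to the original class. Chaining these gives $\pcsp(\classH^{\core(\classA)})\leq\pcsp(\core(\classA)^*)\leq\pcsp(\classA)$, and Marx's theorem is applied directly to the class of core hypergraphs (which has unbounded submodular width by Lemma~\ref{lem:swidth}); no property of Marx's construction beyond its statement is used. Without these two steps your proposal is a correct plan but not a proof.
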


Our proof of the theorem relies on two central lemmas. First, we show how bounded semantic submodular width leads to fixed-parameter tractability. The basic idea is simple, instead of solving a CSP instance with possibly arbitrarily high submodular width, we want to solve an equivalent instance with low width. However, it is not clear how to find such an equivalent instance and whether finding it is decidable.
For generalized hypertree width~\cite{DBLP:journals/sigmod/BarceloPR17} have recently shown, that for any structure $\bfA$, $\sghw(\bfA)$ is precisely $ghw(\core(\bfA))$.
Indeed, we show in Lemma~\ref{lem:swidth}, that the same connection also holds for the more complex cases of fractional hypertree width and submodular width. Note that for treewidth this property is trivial since treewidth is hereditary, i.e., removing edges from a hypergraph can not increase its treewidth. The width functions considered here are not hereditary and involve additional technical considerations beyond those necessary for the $ghw$ case.

\begin{lemma}
  \label{lem:swidth}
  For every structure $\bfA$:
  \begin{enumerate}
  \item $\semantic{\rho^*}(\bfA) = \rho^*(\core(\bfA))$ \label{swidth:rho}
  \item $\sfhw(\bfA) = \fhw(\core(\bfA))$
  \item $\ssubw(\bfA) = subw(\core(\bfA))$
  \end{enumerate}
\end{lemma}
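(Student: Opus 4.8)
The plan is to prove all three parts by establishing, for each width $w \in \{\rho^*, \fhw, subw\}$, the two inequalities $\semantic{w}(\bfA) \leq w(\core(\bfA))$ and $\semantic{w}(\bfA) \geq w(\core(\bfA))$. The first inequality is immediate from the definitions: since $\core(\bfA) \equiv \bfA$ (recall that $(\bfA,\bfB)$ is always equivalent to $(\core(\bfA),\bfB)$, and semantic equivalence coincides with homomorphic equivalence), $\core(\bfA)$ is one of the structures over which the infimum defining $\semantic{w}$ is taken, so $\semantic{w}(\bfA) = \inf\{w(\bfA') \mid \bfA' \equiv \bfA\} \leq w(\core(\bfA))$. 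The real content is the reverse inequality, which amounts to showing that no structure $\bfA'$ semantically equivalent to $\bfA$ can have width strictly smaller than that of the core: for every $\bfA' \equiv \bfA$ we must prove $w(\core(\bfA)) \leq w(\bfA')$.

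To get the reverse inequality, I would first reduce to comparing the core against an arbitrary equivalent structure. Since $\bfA' \equiv \bfA$ implies $\core(\bfA') \cong \core(\bfA)$ (cores are unique up to isomorphism and all widths here are isomorphism-invariant), it suffices to show that passing to the core never increases the width, i.e. $w(\core(\bfA')) \leq w(\bfA')$ for every structure. The key structural fact is that $\core(\bfA')$ is (isomorphic to) a substructure of $\bfA'$: the core is obtained by a retraction, so $H(\core(\bfA'))$ is essentially an induced subhypergraph of $H(\bfA')$ on the vertex set corresponding to the core's domain, with the caveat that projecting a hyperedge of $\bfA'$ onto the core's vertex set can shrink it. The plan is therefore to take an optimal tree decomposition of $H(\bfA')$ and restrict every bag to the vertices of the core (intersecting each bag $B_u$ with the core's domain), verifying that this yields a valid tree decomposition of $H(\core(\bfA'))$ whose $w$-width does not exceed that of the original.

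The main obstacle — and the reason this is nontrivial for $\rho^*$, $\fhw$, and $subw$ but trivial for treewidth — is that these width functions are not hereditary under taking induced subhypergraphs, because the available edges for covering change when we pass to the subhypergraph: an edge of $H(\bfA')$ contributes only its intersection with the core to $H(\core(\bfA'))$. I would handle this by transporting the \emph{cover/function witnesses} from $\bfA'$ to the core through the retraction homomorphism $r: \bfA' \to \core(\bfA')$. Concretely, for $\rho^*$ (part~\ref{swidth:rho}) I would take an optimal fractional edge cover of a bag in $H(\bfA')$ and push its weights forward along $r$ onto the edges of $H(\core(\bfA'))$, arguing that coverage of each core vertex is preserved while total weight does not increase, so $\rho^*$ of the restricted bag is bounded by $\rho^*$ of the original; the $\fhw$ statement (part 2) then follows bag-by-bag over the restricted decomposition. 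For $subw$ (part 3) the argument is analogous but at the level of functions rather than covers: given a monotone, edge-dominated, submodular function $b$ witnessing a bound on some decomposition of $H(\bfA')$, I would pull it back to the core via $r$ (e.g. defining $b'(X) = b(r^{-1}(X))$ or the pushforward of the vertex set), and check that edge-domination survives the retraction — this is the delicate point, since I must ensure edges of the core, which are images of shrunken edges of $\bfA'$, still satisfy $b'(e) \leq 1$.

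I expect the edge-domination check under the retraction to be the crux: one must verify that because $r$ maps each edge of $\bfA'$ onto (a superset of) an edge of the core, the induced function remains edge-dominated, and that submodularity and monotonicity are preserved under taking preimages. Once this transport-along-$r$ lemma is in place, all three parts follow uniformly, with part~\ref{swidth:rho} and part 3 being the base cases and part 2 an immediate consequence of the $\rho^*$ case applied within each bag. I would organize the write-up so that the common retraction argument is stated once and specialized to each width function, to avoid repeating the decomposition-restriction bookkeeping three times.
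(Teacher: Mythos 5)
Your overall architecture (reduce to showing $w(\core(\bfA'))\leq w(\bfA')$, restrict every bag to the core's vertex set, and transport witnesses along the retraction $r$) is exactly the paper's, and your treatment of parts 1 and 2 is sound: the paper proves precisely your pushforward claim (given a fractional edge cover $\mathbf{x}$ of $G$ and a homomorphism $f:G\to H$, setting $x'_h=\sum_{g\in f^{-1}(h)}x_g$ yields a cover of $f(V(G))$ of the same total weight), and then notes $B_u\cap V(H')=r(B_u\cap V(H'))\subseteq r(B_u)$, so the pushed-forward cover covers the restricted bag. So far, same proof.

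The gap is in part 3, and it is a real one: your function transport goes in the wrong direction. Since $subw(H')=\sup_{b'}\,b'\text{-width}(H')$ ranges over \emph{all} monotone, edge-dominated, submodular $b'$ on $V(H')$, to prove $subw(H')\leq subw(H)$ you must start from an \emph{arbitrary} $b'$ on the core and exhibit a decomposition of $H'$ of small $b'$-width. Starting from a function $b$ on $H(\bfA')$ and pulling it back only controls the supremum over the image of your transport map, which a priori bounds $subw(H')$ from below, not above. Worse, the concrete pullback you propose, $b'(X)=b(r^{-1}(X))$, is generally \emph{not} edge-dominated: every edge $e'$ of $H'$ is already an edge of $H$, but $r^{-1}(e')$ can be much larger than $e'$, so $b(r^{-1}(e'))$ can exceed $1$ (take $b(X)=|X|/\mathit{rank}(H)$ and a core edge with a large preimage). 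You correctly flag edge-domination as the delicate point, but the fix is to reverse the direction: given $b'$ on $V(H')$, \emph{extend} it to $V(H)$ by $b(X):=b'(X\cap V(H'))$. Monotonicity and submodularity are inherited immediately; edge-domination holds because for any edge $e$ of $H$ one has $e\cap V(H')\subseteq r(e)\in E(H')$ (as $r$ fixes core vertices), whence $b(e)=b'(e\cap V(H'))\leq b'(r(e))\leq 1$. Then the optimal decomposition of $H$ for $b$, restricted bagwise, has $b'$-width equal to its $b$-width since $b'(B_u\cap V(H'))=b(B_u)$, giving $b'\text{-width}(H')\leq subw(H)$ for every $b'$ and hence the claim.
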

\begin{proof}[Proof (Sketch)]
  First, since call equivalent structures have isomorphic cores it is
  enough to show $w(core(\bfA)) \leq w(\bfA)$ to establish that
  $\semantic{w}(\bfA) = w(core(\bfA))$ for any invariant $w$.

  Let $H$ be the hypergraph of $\bfA$ and $H'$ the hypergraph of
  $core(\bfA)$. Note that there exists an homomorphism $h$ from $\bfA$
  to $core(\bfA)$ where $h(a) = a$ for all elements in the domain of
  $core(\bfA)$. From this we can then show that for every tree decomposition
  $(T, (B_u)_{u \in T})$ of $H$, there exists a tree decomposition $(T, (B'_u)_{u \in T})$
  of $H'$ where $B'_u = B_u \cap V(H')$.

  For the $\fhw$ case we then use an observation on how fractional edge
  covers behave under homomorphisms to show that this transformation
  does not increase the $\rho^*$-width. Hence, we can transform the tree decomposition for $H$ with minimal $\rho^*$-width
  into a new tree decomposition for $H'$ with less or equal $\rho^*$-width, i.e., $\fhw(H') \leq \fhw(H)$.
  The observation for edge covers under homomorphisms also leads
  to the result for $\semantic{\rho^*}$.

  The $subw$ case requires additional considerations as the width is
  now defined over a whole class of functions $\mathcal{F}$.  We show
  that for every edge-dominated, submodular function $f'$ over $H'$
  there exists an edge-dominated submodular function $f$ over $H$
  such that $f'$-width$(H') \leq f$-width$(H)$. In particular, for every $f'$
  this the function $f: X \mapsto f'(X \cap V(H'))$ has the required properties.
\end{proof}
While $\semantic{\rho^*}$ and $\sfhw$ are less general than $\ssubw$
they will be of further interest in the discussion of \ptime solvability of CSPs in Section~\ref{sec:ptime}.
In the
context of our main result, the most important consequence of
Lemma~\ref{lem:swidth} is that we are always able to find the
equivalent structure with minimal submodular width by simply computing
the core.
In principle, finding the core of a structure is intractable (formally, deciding if a structure $\bfA'$ is the core of a structure $\bfA$, is \dpcomplexity-complete~\cite{DBLP:journals/tods/FaginKP05}). However, in our parameterized setting
the computation of the core of $\bfA$ only depends on the parameter.

To establish a lower bound for classes with unbounded semantic submodular width we will make use of previous results from~\cite{DBLP:journals/toct/ChenM15}. A step in our reduction will require an additional definition that helps us fix the domains of individual elements in the reduction.
For a structure $\bfA$, let $\bfA^*$ be the expansion of $\bfA$ by a new fresh unary relation symbol $U_a$ with $U_a^\bfA = \{a\}$ for every element of the domain $a \in A$. For a class of structures $\classA$ we write $\classA^*$ for $\{\bfA^* \mid \bfA \in \classA \}$. Our intention is to establish our lower bound by reduction from the hypergraph setting of Proposition~\ref{prop:subw}. We will make use of the following two reductions.

\begin{proposition}[\cite{DBLP:journals/toct/ChenM15}]
  \label{prop:star}
  Let $\classA$ be a recursively enumerable class of structures. Then
  \[
    \pcsp(\core(\classA)^*) \leq \pcsp(\classA)
  \]
\end{proposition}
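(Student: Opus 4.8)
The plan is to exhibit an explicit fpt-reduction $R$ sending an instance $(\bfC^*,\bfB)$ of $\pcsp(\core(\classA)^*)$ — where necessarily $\bfC \in \core(\classA)$ — to an instance $(\bfA,\bfB')$ of $\pcsp(\classA)$. The first task is purely parameter-dependent: recover $\bfC$ from $\bfC^*$ by dropping the fresh unary symbols $U_c$, and then find a witness $\bfA \in \classA$ with $\core(\bfA) \cong \bfC$. Since $\bfC \in \core(\classA)$ such an $\bfA$ exists by definition, and since $\classA$ is recursively enumerable I can locate it by enumerating $\classA$ and computing the core of each candidate until one isomorphic to $\bfC$ appears. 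This search depends only on $\bfC$, hence only on the parameter $|\bfC^*|$; as there are finitely many cores of each size, both the running time and the size $|\bfA|$ of the resulting witness are bounded by computable functions of $|\bfC^*|$, which is exactly what conditions~(2) and~(3) of an fpt-reduction require.

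The second task is to fold the element-wise domain restrictions encoded by the unary relations of $\bfC^*$ into a single database $\bfB'$ over the signature of $\bfA$ (the original signature, without the symbols $U_c$). Writing $S_c := U_c^{\bfB} \subseteq B$ for the set of values permitted at $c$, I take $\bfB'$ to be the substructure of the direct product $\bfC \times \bfB$ induced on the domain $\{(c,b) \mid c \in C,\ b \in S_c\}$; concretely, for every relation symbol $R$ of arity $k$ I set $((c_1,b_1),\dots,(c_k,b_k)) \in R^{\bfB'}$ iff $(c_1,\dots,c_k) \in R^{\bfC}$ and $(b_1,\dots,b_k) \in R^{\bfB}$. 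This $\bfB'$ is computable in time $f(|\bfC^*|)\cdot poly(|\bfB|)$, since each $R^{\bfB'}$ has at most $|R^{\bfC}|\cdot|R^{\bfB}|$ tuples, so the whole of $R$ runs in fpt time.

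For correctness I must show $\bfA \to \bfB'$ iff $\bfC^* \to \bfB$. Because $\bfA$ and its core $\bfC$ are homomorphically equivalent, $\bfA \to \bfB'$ holds iff $\bfC \to \bfB'$ holds, so it suffices to relate homomorphisms $\bfC \to \bfB'$ to homomorphisms $\bfC^* \to \bfB$. The easy direction is direct: a homomorphism $g : \bfC^* \to \bfB$ satisfies $g(c) \in S_c$ for all $c$, so $c \mapsto (c, g(c))$ lands in the domain of $\bfB'$ and is immediately a homomorphism $\bfC \to \bfB'$. For the converse, a homomorphism $h : \bfC \to \bfB'$ splits coordinate-wise as $h(c) = (\phi(c), \psi(c))$, and projecting onto each coordinate shows that $\phi$ is an endomorphism of $\bfC$ while $\psi$ is a homomorphism $\bfC \to \bfB$.

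The single nonobvious point — which I expect to be the main obstacle — is that $\phi$ need not be the identity, so $\psi$ only satisfies the "wrong" restrictions $\psi(c) \in S_{\phi(c)}$. This is resolved precisely by the core assumption: since $\bfC$ is a core, every endomorphism is an automorphism, so $\phi$ is invertible. Setting $g := \psi \circ \phi^{-1}$ gives a homomorphism $\bfC \to \bfB$ as a composition of homomorphisms, and for each $c' = \phi(c)$ we get $g(c') = \psi(c) \in S_{\phi(c)} = S_{c'}$; since $\phi$ is a bijection this yields $g(c') \in S_{c'}$ for all $c' \in C$, i.e.\ a homomorphism $\bfC^* \to \bfB$. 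This closes the equivalence and completes the reduction. The delicate interplay is that the unary relations make the restrictions element-specific (forcing the $\bfC$-coordinate into the product) while the core property lets us cancel the automorphism $\phi$ that this coordinate introduces; remove either ingredient and the two problems no longer align.
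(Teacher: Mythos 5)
Your reduction is correct; note that the paper does not prove Proposition~\ref{prop:star} itself but imports it from \cite{DBLP:journals/toct/ChenM15}, and your argument is essentially the standard one behind that result: locate a witness $\bfA \in \classA$ with $\core(\bfA) \cong \bfC$ by enumeration (pure parameter work), fold the unary constraints into the product-style database $\bfB'$, and pass between $\bfA$ and $\bfC$ via homomorphic equivalence. You also correctly isolate the one genuinely delicate point --- the endomorphism $\phi$ extracted from the first coordinate need not be the identity --- and resolve it exactly as required, using that every endomorphism of a core is an automorphism and composing with $\phi^{-1}$.
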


\begin{lemma}
  \label{lem:redh}
  Let $\classA$ be a recursively enumerable class of structures and let $\classH^{\classA}$ be the class of hypergraphs of $\classA$.
  \[
    \pcsp(\classH^{\classA}) \leq \pcsp(\classA^*)
    \]
\end{lemma}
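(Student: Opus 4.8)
The plan is to build an fpt-reduction that, given an instance $(\bfC,\bfD)$ of $\pcsp(\classH^{\classA})$ whose hypergraph $H(\bfC)$ lies in $\classH^{\classA}$, outputs an equivalent instance $(\bfA^*,\bfB)$ of $\pcsp(\classA^*)$. First I would exploit recursive enumerability: since $H(\bfC)\in\classH^{\classA}$, some $\bfA\in\classA$ has $H(\bfA)$ isomorphic to $H(\bfC)$, so I enumerate $\classA$ and test hypergraph isomorphism until the first such $\bfA$ is found, recording an isomorphism $\pi$ between the vertex sets (i.e.\ domains) $A$ and $C$. This search depends only on $H(\bfC)$, hence only on the parameter $|\bfC|$; fixing the enumeration makes $|\bfA|$ a computable function of $H(\bfC)$, and as there are finitely many hypergraphs of each size this bounds $|\bfA^*|$ by a computable function of $|\bfC|$, supplying the required parameter bound.

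\textbf{The target structure.} The heart of the construction is $\bfB$. I take its domain to be $A\times D$ and set $U_a^{\bfB}=\{a\}\times D$ for every $a\in A$. Because $U_a^{\bfA^*}=\{a\}$, any homomorphism $h\colon\bfA^*\to\bfB$ must send $a$ into $U_a^{\bfB}$, so $h$ has the form $a\mapsto(a,\phi(a))$ for some $\phi\colon A\to D$; the unary relations of the $*$-expansion thus force $h$ to \emph{tag} each image with the element it came from. This tagging is precisely what lets me recover the constraint scope of $\bfA$, which matters because $\bfA$ and $\bfC$ generally have different signatures and one relation symbol of $\bfA$ may be applied to several scopes. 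For each non-unary relation symbol $R$ of $\bfA$ I then put a tuple $((a_1,d_1),\dots,(a_r,d_r))$ into $R^{\bfB}$ exactly when the partial assignment $\pi(a_i)\mapsto d_i$ satisfies every constraint of $\bfC$ whose scope is the hyperedge $\pi(\{a_1,\dots,a_r\})$. Since the first coordinates determine the scope, $R^{\bfB}$ is well defined, and it is computable in time $f(|\bfC|)\cdot poly(|\bfD|)$.

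\textbf{Correctness.} For the equivalence $\bfA^*\to\bfB$ iff $\bfC\to\bfD$ I would identify the map $\phi\colon A\to D$ underlying a homomorphism $h$ with $\psi=\phi\circ\pi^{-1}\colon C\to D$. For soundness, if $\psi$ is a homomorphism $\bfC\to\bfD$, then for every tuple of $R^{\bfA}$ its $\phi$-image lies in $R^{\bfB}$, because $\psi$ satisfies all $\bfC$-constraints, so the lifted $h$ is a homomorphism. For completeness, every constraint of $\bfC$ has as its scope some hyperedge $f$ of $H(\bfC)$; as $\pi$ is a hypergraph isomorphism, $\pi^{-1}(f)$ is a hyperedge of $H(\bfA)$, so some tuple of $\bfA$ has exactly this scope, and the membership requirement $h(\text{that tuple})\in R^{\bfB}$ forces precisely this $\bfC$-constraint to be satisfied by $\psi$. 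Hence $h$ being a homomorphism makes $\psi$ one.

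\textbf{Verification and main obstacle.} The three fpt conditions then follow: the equivalence above gives condition~(1), the construction runs in time $f(|\bfC|)\cdot poly(|\bfD|)$, and $|\bfA^*|$ is bounded in $|\bfC|$ as argued. I expect the main obstacle to be the signature mismatch between $\bfA$ and $\bfC$, compounded by a single relation symbol of $\bfA$ being reused on many scopes; this is exactly what the $*$-expansion resolves, since the unary relations pin down the identity of each element and thereby let the scope be recovered inside the relations of $\bfB$. A secondary technical point is arguing that a witness $\bfA$ of bounded size can be obtained within the parameter budget, which is where recursive enumerability of $\classA$ is used.
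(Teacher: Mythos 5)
Your proposal is correct and follows essentially the same route as the paper's proof: find a witness $\bfA$ by enumeration, build $\bfB$ over $A\times D$ with the unary relations forcing homomorphisms into the form $a\mapsto(a,\phi(a))$, and populate each $R^{\bfB}$ with exactly the assignments satisfying all $\bfC$-constraints on the corresponding hyperedge (the paper's sets $F_e$). Your explicit handling of the hypergraph isomorphism $\pi$ and of the parameter bound on $|\bfA^*|$ is slightly more careful than the paper, which simply identifies the two domains, but the argument is the same.
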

\begin{proof}
  Let $(\bfC,\bfD)$ be an instance of $\csp(\classH^{\classA})$ and let $H$ be the hypergraph of $\bfC$ and $D$ be the domain of $\bfD$. Recall that edges can represent multiple constraint scopes, i.e., multiple tuples in $\bfC$. For each edge $e \in E(H)$, we consider the sets $F_{t_1},\dots,F_{t_k}$ of satisfying assignments $e \to D$ for each of the tuples $t_1,\dots,t_k$ of $\bfC$ that become edge $e$ in the hypergraph. We then produce the set $F_e= \bigcap_{i=1}^k F_{t_i}$ of satisfying assignments over all the tuples for $e$. Observe that computing $F_e$ for all $e \in E(H)$ is possible in polynomial time.
   
By definition there exists a structure $\bfA^*$ in $\classA^*$ where
$\bfA$ has hypergraph $H$. We can compute such a $\bfA^*$ by
enumeration of $\classA$ until we find an $\bfA$ with a matching
hypergraph and then computing $\bfA^*$ from $\bfA$.

We will reduce
$(\bfC,\bfD)$ to $(\bfA^*, \bfB)$ where $\bfB$ is constructed as
follows. As the domain of $\bfB$ we take $A \times D$. For each $a \in A$ we have a $U_a^{\bfA}$ with
$U_a^{\bfA} = \{a\}$. Let $U_a^\bfB = \{ (a,d) \mid d \in D\}$. For
each other relation symbol $R$ of $\bfA$ and each tuple
$(a_1, \dots, a_k) \in R^\bfA$, we add tuples
$((a_1,f(a_1)),\dots, (a_k,f(a_k)))$ to $R^\bfB$ where $f \in F_e$ and
$e$ is the hyperedge $\{a_1, \dots, a_k\}$.

We now show that $(\bfC, \bfD)$ has a solution iff $(\bfA^*, \bfB)$
has a solution. First, 
suppose $h$ is a homomorphism from $\bfC$ to
$\bfD$ and note that $\bfA^*$ and $\bfC$ have the same domain since
$\bfA$ and $\bfC$ have the same underlying hypergraph.
It is then not difficult to see that $g: a \mapsto (a, h(a))$ is a homomorphism from $\bfA^*$ to $\bfB$:
For the unary relations $U_a^\bfA$, the image trivially exists in $U_a^\bfB$.
For the other relations, it is enough to observe hat for every edge $e$ of $H$, the assignment $h$ restricted to variables in $e$ must be in $F_e$.

For the other side, observe that a homomorphism $g$ from $\bfA^*$ to $\bfB$ must be of the form $a \mapsto (a, h(a))$. We argue that $h$ is a homomorphism from $\bfC$ to $\bfD$.
As $\bfA$ and $\bfC$ have the same domain, $h$ also applies to the domain of $\bfC$. By definition of $F_e$
we have that for every tuple $\bar{x}$ in $R^\bfC$, $h$ maps to a tuple in $R^\bfD$ as long as $\bar{x}$ is covered by some edge of $\bfA$. Since the hypergraphs are the same, this holds for all the tuples in $\bfC$ and therefore $h$ is a homomorphism.
\end{proof}

\emph{Proof of Theorem~\ref{thm:main}.}
Let $\classH^{\core(\classA)}$ be the class of hypergraphs of the structures in $\core(\classA)$. We claim that the two problems $\pcsp(\classA)$ and $\pcsp(\classH^{\core(\classA)})$ are fpt-reducible to each other. If the claim holds, $\pcsp(\classA)$ is fixed-parameter tractable iff $\pcsp(\classH^{\core(\classA)})$ is fixed-parameter tractable. By Proposition~\ref{prop:subw} this is the case iff $\classH^{\core(\classA)}$ has bounded submodular width. By Lemma~\ref{lem:swidth}, this is equivalent to $\classA$ having bounded semantic submodular width.

What is left, is to show the claim. First, we observe:
\[
  \pcsp(\classA) \leq \pcsp(\core(\classA)) \leq \pcsp(\classH^{\core(\classA)})
\]
The left reduction holds because $(\bfA, \bfB)$ is equivalent to $(core(\bfA), \bfB)$ and computing the core is feasible in $f(|\bfA|)$ time. The right reduction is trivial since all instances of $\pcsp(\core(\classA))$ are also instances of $\pcsp(\classH^{\core(\classA)})$. For the other direction we get the intended reduction by straightforward combination of Lemma~\ref{lem:redh} and Proposition~\ref{prop:star}:
\[
   \pcsp(\classH^{\core(\classA)}) \leq \pcsp(\core(\classA)^*) \leq \pcsp(\classA)
 \]
\qed

\subsection{Unions of Conjunctive Queries}
\label{sec:ucqs}

We now extend the characterization in Theorem~\ref{thm:main} from
CSPs to UCQs. To do so we first need to introduce a way to extend the
relevant definitions to UCQs. For our width notions the natural
extension to UCQs is through the maximum of its parts, i.e., for width
function $w$ and UCQ $U = \bigcup_{i=1}^n \bfA_i$ let
$w(U) := \max\{w(\bfA_i)\mid i\in [n]\}$.  Semantic width functions
are defined the same as for CSPs, i.e.,
$\semantic{w} := \inf \{ w(U') \mid U' \equiv U \}$. However,
equivalence of UCQs is more complex than equivalence in CSPs. In particular, the characterization by homomorphic equivalence is no longer applicable. Therefore,
some additional effort is required to determine the analogue of
Lemma~\ref{lem:swidth}.  Using the
following classic result by Sagiv and Yannakakis we can derive the
fitting Lemma~\ref{lem:ucqssubw}.

\begin{proposition}[\cite{DBLP:journals/jacm/SagivY80}]
  \label{prop:nrucq}
  Let $U = \bigcup_{i=1}^n \bfA_i$ and $U' = \bigcup_{j=1}^m \bfA'_j$ be non-redundant UCQs.
  Then $U \equiv U'$ if and only if for every $\bfA_i$ there is a unique $\bfA'_j$ such that $\bfA_i \equiv \bfA'_j$.
\end{proposition}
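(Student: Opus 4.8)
The plan is to reduce the semantic equivalence $U \equiv U'$ to statements about homomorphisms between the individual structures by the standard \emph{canonical database} trick: for any structure $\bfC$, instantiating the database as $\bfC$ itself turns the question ``does $(U,\bfC)$ have a solution?'' into ``does some $\bfA_i$ admit a homomorphism into $\bfC$?'', and the identity map guarantees that $(\bfA_i,\bfA_i)$ always has a solution. This lets me pass freely between the semantic and the homomorphism-theoretic pictures, where I may invoke the facts recalled earlier in the excerpt that $\bfA$ is contained in $\bfA'$ iff there is a homomorphism $\bfA' \to \bfA$, and that $\equiv$ coincides with homomorphic equivalence $\simeq$.

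For the forward direction I would argue as follows. Fix a disjunct $\bfA_i$ and consider the instance $(U,\bfA_i)$, which has a solution by the identity homomorphism. Since $U \equiv U'$, the instance $(U',\bfA_i)$ also has a solution, yielding a homomorphism $\bfA'_j \to \bfA_i$ for some $j$. Repeating the idea with the instance $(U',\bfA'_j)$ produces a homomorphism $\bfA_{i'} \to \bfA'_j$ for some $i'$, and composing gives $\bfA_{i'} \to \bfA_i$, i.e.\ $\bfA_i$ is contained in $\bfA_{i'}$. Here non-redundancy of $U$ forces $i'=i$, so I obtain homomorphisms in both directions between $\bfA_i$ and $\bfA'_j$, that is $\bfA_i \equiv \bfA'_j$. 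Uniqueness of $j$ is then immediate, since $\bfA'_{j_1} \equiv \bfA_i \equiv \bfA'_{j_2}$ would contradict non-redundancy of $U'$ unless $j_1=j_2$. Running the whole argument with the roles of $U$ and $U'$ exchanged produces the reverse assignment, and the two maps are mutually inverse (again by non-redundancy), so the correspondence is actually a bijection.

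For the reverse direction I would use this bijective correspondence directly. If $(U,\bfB)$ has a solution, some $\bfA_i \to \bfB$; composing with the homomorphism $\bfA'_j \to \bfA_i$ witnessing $\bfA_i \equiv \bfA'_j$ gives $\bfA'_j \to \bfB$, so $(U',\bfB)$ has a solution, and the converse implication is symmetric via the inverse pairing. Hence $U \equiv U'$.

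The main subtlety to flag is that the right-hand side really has to supply a pairing in both directions: the literal one-directional reading (``for every $\bfA_i$ a unique $\bfA'_j$'') is not on its own sufficient for the reverse implication, since a disjunct of $U'$ with no counterpart in $U$ could be separated by taking that disjunct as the database. The forward direction repairs this automatically by symmetry, so the correct formulation is the bijection, and I would make this explicit. Beyond that, the only delicate points are keeping the direction of the homomorphisms consistent with the containment convention, and recognising that non-redundancy is exactly what collapses each ``sandwich'' chain $\bfA_{i'} \to \bfA'_j \to \bfA_i$ into a genuine equivalence and what yields uniqueness.
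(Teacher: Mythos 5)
The paper does not prove this proposition; it imports it from Sagiv and Yannakakis, so there is no in-paper argument to compare against. Your proof is the standard canonical-database argument and it is correct: evaluating $U'$ over the disjunct $\bfA_i$ itself (where the identity witnesses a solution of $(U,\bfA_i)$) extracts a homomorphism $\bfA'_j \to \bfA_i$, the second round extracts $\bfA_{i'} \to \bfA'_j$, and non-redundancy collapses the chain $\bfA_{i'} \to \bfA'_j \to \bfA_i$ to $i'=i$, giving $\bfA_i \equiv \bfA'_j$; uniqueness and the inverse pairing again follow from non-redundancy. Your handling of the containment convention (hom from $\bfA'$ to $\bfA$ iff $\bfA$ contained in $\bfA'$) is consistent with the paper's definitions throughout.

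Your flagged subtlety is also well taken and worth making explicit: as literally printed, the condition ``for every $\bfA_i$ there is a unique $\bfA'_j$ with $\bfA_i \equiv \bfA'_j$'' only yields an injection of the disjuncts of $U$ into those of $U'$, and is not sufficient for $U \equiv U'$ (take $U = \bfA_1$ and $U' = \bfA_1 \cup \bfA'_2$ with $\bfA'_2$ incomparable to $\bfA_1$; then $(U',\bfA'_2)$ has a solution but $(U,\bfA'_2)$ need not). The correct reading, and the one Sagiv--Yannakakis actually establish, is the bijective correspondence your forward direction produces. Note that the paper's only use of the proposition, in Lemma~\ref{lem:ucqssubw}, invokes the forward (sound) direction, so this imprecision does not affect the paper's results, but your corrected formulation is the one that should be stated.
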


\begin{lemma}
  \label{lem:ucqssubw}
  Let $U$ be an UCQ, then
  \[
    \ssubw(U) = \max\{ subw(core(\bfA_i)) \mid \bfA_i \in nr(U) \}
  \]
\end{lemma}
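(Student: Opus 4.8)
The plan is to prove the two inequalities separately. Write $M := \max\{subw(core(\bfA_i)) \mid \bfA_i \in nr(U)\}$. The two main tools will be Lemma~\ref{lem:swidth} (from which I take both the bound $subw(core(\bfC)) \leq subw(\bfC)$ and the fact, used in its proof, that homomorphically equivalent structures have isomorphic cores) and the Sagiv--Yannakakis characterization in Proposition~\ref{prop:nrucq}.

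For the upper bound $\ssubw(U) \leq M$, I would exhibit an explicit equivalent witness. Set $U' := \bigcup_{\bfA_i \in nr(U)} core(\bfA_i)$. Since $nr(U) \equiv U$ and each disjunct $\bfA_i$ is replaced by the CSP-equivalent structure $core(\bfA_i)$ (recall $\bfA_i \equiv core(\bfA_i)$), a routine check gives $U' \equiv U$: for every database $\bfB$, the instance $(U',\bfB)$ has a solution iff some $(core(\bfA_i),\bfB)$ does, iff some $(\bfA_i,\bfB)$ does, iff $(U,\bfB)$ does. By the definition $subw(U') = \max_i subw(core(\bfA_i)) = M$, and since $\ssubw(U)$ is an infimum over all equivalent UCQs, we obtain $\ssubw(U) \leq subw(U') = M$.

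For the lower bound $\ssubw(U) \geq M$, I would argue that \emph{every} $U'' \equiv U$ satisfies $subw(U'') \geq M$. Given such a $U''$, pass to $nr(U'')$: it is non-redundant, still equivalent to $U$, and importantly satisfies $nr(U'') \subseteq U''$. Now $nr(U)$ and $nr(U'')$ are both non-redundant and equivalent, so Proposition~\ref{prop:nrucq} provides, for the disjunct $\bfA_{i^*} \in nr(U)$ attaining $M$, a disjunct $\bfA''_{j^*} \in nr(U'')$ with $\bfA_{i^*} \equiv \bfA''_{j^*}$. Equivalent structures have isomorphic cores, hence $subw(core(\bfA''_{j^*})) = subw(core(\bfA_{i^*})) = M$, and Lemma~\ref{lem:swidth} yields $subw(\bfA''_{j^*}) \geq subw(core(\bfA''_{j^*})) = M$. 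Because $nr(U'') \subseteq U''$, the structure $\bfA''_{j^*}$ is also a disjunct of $U''$, so $subw(U'') \geq subw(\bfA''_{j^*}) \geq M$. Taking the infimum over all $U'' \equiv U$ gives $\ssubw(U) \geq M$, and combined with the upper bound the two sides coincide.

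The main obstacle is the lower bound: the infimum in $\ssubw(U)$ ranges over \emph{all} equivalent UCQs, not only non-redundant ones, so I cannot apply Sagiv--Yannakakis directly to $U''$. The key move is to first reduce to $nr(U'')$ to apply Proposition~\ref{prop:nrucq}, and then exploit that making a UCQ non-redundant only deletes disjuncts ($nr(U'') \subseteq U''$) in order to transfer the width lower bound back from $nr(U'')$ to $U''$ itself. Once this correspondence between disjuncts is in place, everything else is bookkeeping on top of Lemma~\ref{lem:swidth}.
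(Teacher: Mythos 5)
Your proof is correct and follows essentially the same route as the paper's: the explicit witness $\bigcup_i core(\bfA_i)$ for the upper bound, and for the lower bound the same chain of passing to $nr(U'')$, matching disjuncts via Proposition~\ref{prop:nrucq}, applying Lemma~\ref{lem:swidth}, and using $nr(U'')\subseteq U''$ to transfer the bound back. The only cosmetic difference is that the paper phrases the lower bound as a proof by contradiction while you argue it directly.
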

\begin{proof}
  It is clear that the right side of the equality is the $subw$ of an
  UCQ that is equivalent to $U$. All that is to show is that this is
  in fact the minimal subw of an equivalent UCQ. %
  For the sake of brevity we will write core-$subw(nr(U))$ for $\max\{ subw(core(\bfA_i)) \mid \bfA_i \in nr(U) \}$ in the rest of the argument.
  
  Proof is by contradiction. Suppose there exist a UCQ $V \equiv U$ with $subw(V) < $ core-$subw(nr(U))$. Since $V \equiv U$, clearly also $nr(V) \equiv nr(U)$. Furthermore, since $nr(V) \subseteq V$ (recall the construction of $nr(V)$) we also have $subw(nr(V)) < $ core-$subw(nr(U)$. Now, from Proposition~\ref{prop:nrucq} we have that for every $\bfB_i \in nr(V)$, there is an equivalent $\bfA_j\in nr(U)$. By Lemma~\ref{lem:swidth} it follows that
  $subw(\bfB_i) \geq \ssubw(\bfB_i) = \ssubw(\bfA_j) = subw(core(\bfA_j))$
  for all such combinations of $\bfB_i$ and $\bfA_j$. From the
  definition of $subw$ for UCQs this then gives an immediate contradiction of
  $subw(nr(V)) < $ core-$subw(nr(U))$.
\end{proof}

From Lemma~\ref{lem:ucqssubw} it is now easy to see, that for a class of UCQs $\classU$ with bounded $\ssubw$, the $\bucq(\classU)$ problem is fixed-parameter tractable. For every $U$ in $\classU$ we can simply compute $nr(U) = \bigcup_{i=1}^n \bfA_i$ and then solve the CSPs $(core(\bfA_i), \bfB)$ individually. In combination with Theorem~\ref{thm:main} we see that this procedure is fixed-parameter tractable.

To establish the lower bound, we make use of previous work on the complexity of existential positive logic~\cite{DBLP:journals/tocl/Chen14}. The result there is stated in a different setting but a translation is not difficult through the well-known equivalence of solving CSPs and model checking of primitive positive first-order formulas.

\begin{proposition}[Theorem 3.2 in \cite{DBLP:journals/tocl/Chen14}]
  \label{prop:ucqred}
  Let $\classU$ be recursively enumerable class of non-redundant UCQs and let $\classA$ be the class of all individual structures that make up the UCQs in $\classU$. Then $\pcsp(\classA) \leq \bucq(\classU)$.
\end{proposition}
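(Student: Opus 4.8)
The plan is to reduce an instance $(\bfA,\bfB)$ of $\pcsp(\classA)$ to an instance of $\bucq(\classU)$ by \emph{isolating} $\bfA$ as the only disjunct that can witness a solution. Since $\bfA \in \classA$, by definition $\bfA$ occurs, up to isomorphism, as one of the disjuncts of some UCQ $U = \bigcup_{i=1}^n \bfA_i$ in $\classU$; say $\bfA_j \cong \bfA$. As $\classU$ is recursively enumerable, I would first locate such a $U$ and index $j$ by enumerating $\classU$ and testing each UCQ's disjuncts for isomorphism with $\bfA$. This search inspects only the query side, so it depends on $\bfA$ alone and not on the database $\bfB$.

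The heart of the reduction is the direct product. Recall that the direct product $\mathbf{X}\times\mathbf{Y}$ has domain the Cartesian product of the two domains and interprets each relation componentwise, and that it satisfies the standard universal property: for any structure $\mathbf{Z}$ there is a homomorphism $\mathbf{Z}\to\mathbf{X}\times\mathbf{Y}$ if and only if there are homomorphisms $\mathbf{Z}\to\mathbf{X}$ and $\mathbf{Z}\to\mathbf{Y}$ (project in one direction, pair the two maps in the other). I would set the database to $\bfB' := \bfB \times \bfA_j$ and output the $\bucq(\classU)$ instance $(U,\bfB')$.

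Correctness rests on non-redundancy of $U$. For the target disjunct, a homomorphism $\bfA_j \to \bfB'$ exists iff there are homomorphisms $\bfA_j \to \bfB$ and $\bfA_j \to \bfA_j$; the latter is the identity and always exists, so $\bfA_j$ maps into $\bfB'$ exactly when $(\bfA,\bfB)=(\bfA_j,\bfB)$ has a solution. For every other disjunct $\bfA_i$ with $i \neq j$, a homomorphism $\bfA_i \to \bfB'$ would in particular require a homomorphism $\bfA_i \to \bfA_j$; but by non-redundancy $\bfA_j$ is not contained in $\bfA_i$, i.e.\ no such homomorphism exists. Hence only $\bfA_j$ can map into $\bfB'$, and $(U,\bfB')$ has a solution if and only if $(\bfA,\bfB)$ does.

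It remains to check the three conditions of an fpt-reduction. The product $\bfB \times \bfA_j$ is computable in time polynomial in $|\bfB|\cdot|\bfA_j|$, and since $|\bfA_j| = |\bfA|$, the whole construction (including the database-independent search for $U$) runs in $f(|\bfA|)\,poly(|\bfB|)$ time. The step I expect to require the most care is the parameter condition: the new parameter is $\sum_{i=1}^n |\bfA_i| = |U|$, which a priori depends on $\bfA$ rather than merely on $|\bfA|$. I would discharge this by fixing, for each $\bfA$, the first host UCQ $U(\bfA)$ found in the enumeration and observing that (up to isomorphism, and after discarding superfluous relation symbols) there are only finitely many structures of any given size; consequently $g(s) := \max\{\, |U(\bfA)| : \bfA \in \classA,\ |\bfA| \le s \,\}$ is a well-defined, computable bound with $|U| \le g(|\bfA|)$, as required. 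The product trick itself is routine once this bound and the appeal to non-redundancy are in place.
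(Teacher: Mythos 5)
Your argument is correct, but it is worth pointing out that the paper does not actually prove this proposition: it is imported wholesale as Theorem~3.2 of \cite{DBLP:journals/tocl/Chen14}, with only the remark that a ``translation is not difficult'' from the existential-positive-logic setting used there. So you have supplied a self-contained proof where the paper offers a citation. Your route --- enumerate $\classU$ to locate a host query $U$ with a disjunct $\bfA_j \cong \bfA$, then evaluate $U$ on the product database $\bfB \times \bfA_j$ --- is the standard way to prove exactly this kind of lower bound, and every step checks out against the paper's definitions: the universal property of the product gives $\bfA_j \to \bfB \times \bfA_j$ iff $\bfA_j \to \bfB$, while for $i \neq j$ composing with the projection $\bfB \times \bfA_j \to \bfA_j$ would yield a homomorphism $\bfA_i \to \bfA_j$, which is precisely what non-redundancy (no containment, hence no homomorphism, between distinct disjuncts) rules out. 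Your treatment of the two fpt side conditions --- the database-independent enumeration and the bound $|U| \le g(|\bfA|)$ via finiteness of the set of structures of bounded size --- is at the same level of rigor as the paper's own enumeration-based reductions (e.g.\ in Lemma~\ref{lem:redh} and in the step $\bucq(nr(\classU)) \leq \bucq(\classU)$), so nothing further is required there; the only cosmetic point left implicit is that the product should be formed over the full signature of $U$ so that $(U,\bfB\times\bfA_j)$ is a well-formed instance, which changes nothing in the argument. What your proof buys is transparency about where non-redundancy is actually used; what the citation buys the paper is brevity and access to Chen's more general treatment of existential positive queries.
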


\begin{theorem}
  Let $\classU$ be a recursively enumerable class of UCQs. Assuming the Exponential Time Hypothesis, $\bucq(\classU)$ is fixed-parameter tractable if and only if\, $\classU$ has bounded semantic submodular width. 
\end{theorem}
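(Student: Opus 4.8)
The plan is to prove both directions by combining the two lemmas and propositions already established, mirroring the structure of the proof of Theorem~\ref{thm:main}. The theorem asserts that $\bucq(\classU)$ is fixed-parameter tractable if and only if $\classU$ has bounded semantic submodular width, where by definition $\ssubw(U) = \max\{subw(core(\bfA_i)) \mid \bfA_i \in nr(U)\}$ thanks to Lemma~\ref{lem:ucqssubw}. So throughout I would silently identify ``bounded $\ssubw$'' with ``bounded $subw$ of the cores of the non-redundant parts.''

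For the \emph{upper bound} (bounded $\ssubw$ $\Rightarrow$ \fpt), the argument is already sketched in the paragraph following Lemma~\ref{lem:ucqssubw}: given $U \in \classU$ with $\ssubw(U) \leq k$, first compute $nr(U) = \bigcup_{i=1}^n \bfA_i$ and then the cores $core(\bfA_i)$; both steps depend only on the parameter $\sum_i |\bfA_i|$, so they cost $f(\kappa)$ time. By Lemma~\ref{lem:ucqssubw} each $core(\bfA_i)$ has $subw \leq k$, so each CSP instance $(core(\bfA_i), \bfB)$ is \fpt-solvable by Proposition~\ref{prop:subw} applied to the class of hypergraphs of the cores (which has bounded submodular width). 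Since a UCQ has a solution iff one of its non-redundant disjuncts does, and since there are at most $|U|$ disjuncts, solving all $n$ of them and taking the disjunction stays within \fpt time.

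For the \emph{lower bound} (\fpt $\Rightarrow$ bounded $\ssubw$), I would argue the contrapositive: suppose $\classU$ has unbounded $\ssubw$. The plan is to reduce an intractable CSP class to $\bucq(\classU)$ via Proposition~\ref{prop:ucqred}. First replace each $U \in \classU$ by $nr(U)$; since $\bucq(\classU)$ and $\bucq(\{nr(U) \mid U \in \classU\})$ are fpt-equivalent (computing $nr(U)$ depends only on the parameter, and $nr(U) \equiv U$), it suffices to handle non-redundant UCQs. Let $\classA$ be the class of all individual structures $\bfA_i$ appearing in the non-redundant UCQs of $\classU$. By Lemma~\ref{lem:ucqssubw} the unboundedness of $\ssubw(\classU)$ forces $subw(core(\bfA_i))$ to be unbounded over $\classA$, hence $\ssubw(\classA)$ is unbounded. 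By Theorem~\ref{thm:main}, $\pcsp(\classA)$ is therefore not fixed-parameter tractable (under ETH). But Proposition~\ref{prop:ucqred} gives $\pcsp(\classA) \leq \bucq(\classU)$, and since the \fpt class is closed under fpt-reductions, $\bucq(\classU)$ cannot be \fpt either.

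The main obstacle I anticipate is the bookkeeping at the reduction to non-redundant UCQs: I must ensure the pass to $nr(\classU)$ preserves recursive enumerability and keeps the parameter controlled (so that Propositions~\ref{prop:subw} and~\ref{prop:ucqred} apply), and I must verify that $\classA$, the class of constituent structures, is itself recursively enumerable and has provably unbounded $\ssubw$ rather than merely unbounded $subw$ of the raw structures. The crucial link is that Lemma~\ref{lem:ucqssubw} identifies $\ssubw(U)$ with the maximum over the \emph{cores} of the non-redundant parts, so unbounded UCQ-level semantic width genuinely propagates to unbounded semantic submodular width of some sequence of constituent structures in $\classA$, which is exactly what Theorem~\ref{thm:main} needs to deliver the hardness.
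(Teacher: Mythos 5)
Your proposal is correct and follows essentially the same route as the paper: the upper bound via computing $nr(U)$ and solving the CSPs $(\core(\bfA_i),\bfB)$ individually, and the lower bound via the contrapositive, passing to $nr(\classU)$, extracting the class $\classA$ of constituent structures, invoking Lemma~\ref{lem:ucqssubw} to get unbounded $\ssubw(\classA)$, and chaining Proposition~\ref{prop:ucqred} with Theorem~\ref{thm:main} and the reduction $\bucq(nr(\classU)) \leq \bucq(\classU)$. The bookkeeping concerns you flag (recursive enumerability of $nr(\classU)$ and $\classA$, finding $U'$ with $nr(U')\equiv U$ by enumeration in parameter-only time) are exactly the points the paper's proof handles.
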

\begin{proof}
  For the case where $\classU$ has bounded semantic submodular width
  we have already given a fixed-parameter tractable procedure for
  solving $\bucq(\classU)$ above. We will establish the lower bound by
  introducing the class $nr(\classU) = \{ nr(U) \mid U \in \classU\}$
  as an intermediate.
  
  Suppose $\classU$ has unbounded $\ssubw$ and let $\classA$ be the
  class of all individual structures that make up the UCQs in $nr(\classU)$.
  From Lemma~\ref{lem:ucqssubw} it follows that $nr(\classU)$ and $\classA$ both also have unbounded $\ssubw$.
  By Proposition~\ref{prop:ucqred} we have $\pcsp(\classA) \leq \bucq(nr(\classU))$ and therefore, by Theorem~\ref{thm:main},
  $\bucq(nr(\classU))$ can not be fixed-parameter tractable.

  To finish the proof we show that
  $\bucq(nr(\classU)) \leq \bucq(\classU)$. The reduction is
  straightforward, an instance $(U, \bfB)$ of $\bucq(nr(\classU))$ is
  reduced to the instance $(U', \bfB)$ of $\bucq(\classU)$ where
  $nr(U') \equiv U$. Such an $U'$ can be found by enumeration of $\classU$ in time that only depends on the parameter.
  Since $nr(U') \equiv U$, the reduction is trivially correct.
\end{proof}

\section{On the Plain Tractability of CSPs}
\label{sec:ptime}

A characterization for the plain (non-parameterized) tractability of CSPs remains an open question. Here we wish to highlight two consequences of our work and recent developments regarding the connection of fractional hypertree width and the Vapnik-Chervonenkis (VC) dimension of a hypergraph presented in~\cite{DBLP:journals/corr/abs-2002-05239}.

\smallskip
\noindent
\textbf{Tractability in natural problem classes.}
Bounded hypertree width ($hw$), generalized hypertree width ($ghw$) and fractional hypertree width ($\fhw$) all represent sufficient conditions for tractable CSP solving, with $\fhw$ being the most general such property we know of. It is known that $hw$ is bounded if and only if $ghw$ is bounded~\cite{DBLP:journals/ejc/AdlerGG07}. Furthermore, there exist classes that exhibit bounded $\fhw$ but unbounded $hw$~\cite{DBLP:journals/talg/GroheM14}.
However,
all known hypergraph classes with bounded $\fhw$ and unbounded $hw$
involve some form of exponential growth that is unlikely to be present in natural problems.
It has remained an open question if this exponential growth is essential for the separation of the two width measures.

Below, we give an answer to this question. The technical details of VC dimension are not important here. Rather we introduce the notion of \emph{\degen} hypergraph classes, a consequence of unbounded VC dimension, to focus on the exponential
character of such classes.
We are able to state that this property is indeed intrinsic to the separation of bounded $\fhw$ and $hw$.
Alternatively, in the contrapositive, we see that for non-\degen classes,
a class has bounded $\fhw$ if and only if it has bounded $hw$. %
In other words, bounded $\fhw$ does not allow for additional tractable cases over bounded $hw$.

\begin{definition}
  Let $\classH$ be a class of hypergraphs. We say that $\classH$ is
  \emph{\degen} if for every integer $n \geq 1$, there exists a
  $H \in \classH$ with a set of $n$ vertices $U \subseteq V(H)$ such
  that $H[U]$ has at least $2^n-1$ distinct edges.
\end{definition}

\begin{theorem}
  \label{thm:degen}
  For any class $\classH$ of hypergraphs, if $\classH$ has unbounded
  hypertree width and bounded fractional hypertree width then $\classH$ is \degen.
\end{theorem}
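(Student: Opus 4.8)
The plan is to prove the contrapositive: assuming that $\classH$ has bounded $\fhw$ and is \emph{not} \degen, I will show that $\classH$ must then have bounded hypertree width, contradicting the hypothesis of unbounded $hw$. Equivalently, this establishes that unbounded $hw$ together with bounded $\fhw$ forces $\classH$ to be \degen.

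First I would turn the failure of \degen-ness into a uniform bound on the VC dimension of the class. If $\classH$ is not \degen, then by definition there is a fixed integer $n$ such that no $H \in \classH$ admits a set $U$ of $n$ vertices for which $H[U]$ has $2^n - 1$ distinct edges. Since $2^n - 1$ is exactly the number of nonempty subsets of an $n$-element set, this says precisely that no $n$-element vertex set is shattered by the edges of any hypergraph in $\classH$; hence every $H \in \classH$ has VC dimension at most $d := n-1$. Thus the whole class has uniformly bounded VC dimension.

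The crucial ingredient is then the connection between VC dimension and edge covers established in \cite{DBLP:journals/corr/abs-2002-05239}: for hypergraphs of VC dimension at most $d$, the integral edge cover number is controlled by the fractional one, i.e.\ there is a monotone function $g_d$ depending only on $d$ such that $\rho_H(X) \le g_d(\rho_H^*(X))$ for every $X \subseteq V(H)$ whenever $H$ has VC dimension at most $d$. Granting this, I would transfer decompositions node by node. Let $k$ bound $\fhw$ over $\classH$, take any $H \in \classH$, and fix a tree decomposition $(T,(B_u)_{u \in T})$ witnessing $\fhw(H) \le k$, so that $\rho_H^*(B_u) \le k$ at every node $u$. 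The \emph{same} decomposition then satisfies $\rho_H(B_u) \le g_d(\rho_H^*(B_u)) \le g_d(k)$ at every node by monotonicity of $g_d$, whence $ghw(H) \le g_d(k)$. As $d$ and $k$ are uniform across the class, this bounds $ghw$ on all of $\classH$ by the single constant $g_d(k)$. Finally, invoking the equivalence of bounded $ghw$ and bounded $hw$ \cite{DBLP:journals/ejc/AdlerGG07} yields bounded hypertree width, completing the contrapositive.

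The only genuinely delicate step is the VC-to-cover bound $\rho_H(X) \le g_d(\rho_H^*(X))$: everything else is routine bookkeeping once this and the $ghw$/$hw$ collapse are available. Concretely, this bound amounts to controlling the integrality gap of the edge-cover LP using the limited shattering guaranteed by non-\degen-ness (e.g.\ via $\epsilon$-nets and the Br\"onnimann--Goodrich iterative-reweighting machinery, passing through the bounded dual VC dimension implied by bounded primal VC dimension). I expect this to be the main obstacle, and it is exactly the content we import from \cite{DBLP:journals/corr/abs-2002-05239}.
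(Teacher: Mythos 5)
Your proposal is correct and follows essentially the same route as the paper: pass to the contrapositive, observe that failure of exoticness bounds the VC dimension of the class (since a shattered $n$-set yields $2^n-1$ nonempty traces in the induced subhypergraph), import the VC-dimension bound on the edge-cover integrality gap from \cite{DBLP:journals/corr/abs-2002-05239} to turn the fractional covers of each bag into integral covers of bounded weight, and finish with the $ghw$/$hw$ collapse of \cite{DBLP:journals/ejc/AdlerGG07}. The paper's appendix makes the integrality-gap step explicit via dual hypergraphs and transversality (obtaining $\rho \le \rho^* \cdot O(\log \rho^*)$ for fixed VC dimension), but this is precisely the ingredient you correctly identify as the delicate imported step.
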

\begin{proof}[Proof (Sketch)]
  As stated above, exoticness is a
  consequence of unbounded VC dimension, thus we also have that if
  $\classH$ is not \degen, then $\classH$ has bounded VC dimension.

  The key observation is then that the integrality gap for fractional
  edge covers can be bounded by a function of the VC dimension. Hence,
  under bounded VC dimension the integrality gap is constant. By
  applying this observation to every bag of a tree decomposition with
  $\fhw \leq k$ we can see that the tree decomposition will also have
  $ghw$ bounded by some function of $k$ and the VC dimension.  Due to
  space restrictions we refer to the proof Theorem 7.8
  in~\cite{DBLP:journals/corr/abs-2002-05239} for details.
  
  In summary, if $\classH$ has bounded
  VC dimension, then $\classH$ has bounded $\fhw$ iff $\classH$ has bounded $ghw$. Recall from above, that also $ghw(\classH)$ is bounded iff
  $hw(\classH)$ is bounded. Hence, the contrapositive of the
  implication in the theorem holds.
\end{proof}

We can extend the \degen property from hypergraphs to classes of
CSPs in the usual way.
Recall, that in the context of CSPs,
incident edges in the hypergraph correspond to constraints that
involve the variable. Hence, if vertices $U$ have $2^{|U|}-1$ distinct
incident edges in the hypergraph, there exists at least one constraint
for every possible combination of the corresponding variables in the
CSP.  We argue that this situation is highly unnatural and believe that this
motivates further study of the complexity of non-\degen classes of
CSP.

\smallskip
\noindent
\textbf{Semantic width and tractability.}
In the parameterized setting, it is easy to utilize low semantic width to establish upper-bounds as computing the core requires time only in the parameter. For tractability the situation is more problematic. As noted in Section~\ref{sec:csp}, finding the core is intractable. Hence, if we have a class with bounded semantic fractional hypertree width, we know that the problem itself is not difficult, but an efficient solution depends on the hard problem of finding the core. We are caught in an unsatisfactory situation where the origin of the hardness is no longer the actual problem but the concrete formulation.

Part of the issue is that utilizing bounded $\fhw$ for polynomial evaluation requires a concrete decomposition with low $\fhw$, which then guides the efficient solution of the CSP. Without knowing the core we cannot compute the appropriate decomposition.
For bounded generalized hypertree width, Chen and Dalmau were able to show, that for classes of bounded $ghw$, there exists an algorithm for solving  CSPs in polynomial time without requiring the explicit computation of a decomposition~\cite{DBLP:conf/cp/ChenD05}. Their method indeed remains polynomial if only the semantic generalized hypertree width is bounded. Thus, we are able to lift their result to bounded semantic fractional hypertree width for non-\degen classes.

\begin{corollary}
  \label{cor:sfhw}
  Let $\classC$ be a non-\degen class of CSPs with bounded semantic fractional hypertree width. Then $\csp(\classC)$ is tractable.
\end{corollary}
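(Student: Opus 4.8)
The plan is to reduce the claim to the polynomial-time algorithm of Chen and Dalmau~\cite{DBLP:conf/cp/ChenD05}, which, as noted just before the corollary, solves $\csp(\classC)$ in polynomial time whenever $\classC$ has bounded semantic generalized hypertree width, \emph{without} ever explicitly computing a core or a decomposition. So the entire task reduces to showing that a non-\degen class $\classC$ of bounded $\sfhw$ in fact has bounded $\sghw$. Since $\sghw(\bfA) = ghw(\core(\bfA))$~\cite{DBLP:journals/sigmod/BarceloPR17} and, by Lemma~\ref{lem:swidth}, $\sfhw(\bfA) = \fhw(\core(\bfA))$, this amounts to a purely hypergraph-level statement about the class $\classH^{\core(\classC)}$ of hypergraphs of the cores: it has bounded $\fhw$, and I must upgrade this to bounded $ghw$.

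To do so I would apply the contrapositive of Theorem~\ref{thm:degen} to $\classH^{\core(\classC)}$: a class that is non-\degen and has bounded $\fhw$ has bounded $hw$, hence (as $ghw \leq hw$) bounded $ghw$. The bounded-$\fhw$ hypothesis is already in hand via Lemma~\ref{lem:swidth}. The one ingredient still missing — and the step I expect to be the crux — is that $\classH^{\core(\classC)}$ is itself non-\degen. The \degen property is assumed only for $\classC$, i.e., for the hypergraphs of the \emph{original} structures, so I need to transfer non-exoticness downward to the cores.

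For this transfer I would use that a core can be realized as an induced substructure: there is an idempotent endomorphism (a retraction) of $\bfA$ whose image induces a copy of $\core(\bfA)$, so the core's domain is a subset $S \subseteq A$ with $R^{\core(\bfA)} = R^\bfA \cap S^{ar(R)}$ for every relation symbol $R$. Consequently every hyperedge of $H' = H(\core(\bfA))$ is already a hyperedge of $H = H(\bfA)$ that happens to lie inside $S$, so $E(H') \subseteq E(H)$. The key observation is then that the \degen property only grows under enlarging the edge set: if some $U \subseteq S$ has $H'[U]$ with $2^{|U|}-1$ distinct edges — meaning every nonempty $W \subseteq U$ arises as $e \cap U$ for some $e \in E(H')$ — then, since each such $e$ is also an edge of $H$, the same traces occur in $H[U]$, so $H[U]$ likewise has at least $2^{|U|}-1$ distinct edges. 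Hence if $\classH^{\core(\classC)}$ were \degen we could exhibit, for every $n$, such a vertex set of size $n$ in the hypergraph of the corresponding original structure, contradicting that $\classC$ is non-\degen. Therefore $\classH^{\core(\classC)}$ is non-\degen.

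Assembling the pieces: $\classH^{\core(\classC)}$ is non-\degen with bounded $\fhw$, so by Theorem~\ref{thm:degen} it has bounded $ghw$; equivalently $\classC$ has bounded $\sghw$; and then the (semantic) Chen--Dalmau algorithm decides $\csp(\classC)$ in polynomial time. I expect the only genuinely delicate point to be the downward transfer of non-exoticness, which hinges on realizing the core as an induced substructure so that its edge set embeds into that of the original; everything else is a routine chaining of Lemma~\ref{lem:swidth}, Theorem~\ref{thm:degen}, and the known characterization of $\sghw$.
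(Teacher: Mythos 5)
Your proof is correct and follows the same route the paper intends for this corollary: chain Lemma~\ref{lem:swidth}, (the contrapositive of) Theorem~\ref{thm:degen}, and the observation that the Chen--Dalmau algorithm remains polynomial under bounded semantic generalized hypertree width. The one step you add beyond the paper's implicit argument is the downward transfer of the non-\degen{} property from $\classC$ to the class of core hypergraphs via the retract realization of the core (so that $E(H(\core(\bfA))) \subseteq E(H(\bfA))$ and induced-subhypergraph edge counts can only shrink); the paper leaves this implicit, and your justification of it is sound and indeed the genuinely delicate point.
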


Any more general sufficient property for tractability would likely have to preserve this feature of making use of the width of the core without actually requiring the computation of the core.
Hence, in light of Theorem~\ref{thm:degen} and Corollary~\ref{cor:sfhw} we conclude the section with the following conjecture.

\begin{conjecture}
  \label{conj}
  Let $\classC$ be a class of non-\degen CSPs. Then $\csp(\classC)$ is tractable if and only if $\classC$ has bounded semantic hypertree width.
\end{conjecture}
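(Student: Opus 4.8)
The statement is an equivalence, and its two directions are of completely different character. The sufficiency direction---that bounded $\semantic{hw}$ forces tractability of $\csp(\classC)$---is within reach of the tools already assembled, and I would dispatch it first. Since $ghw(H) \leq hw(H)$ for every hypergraph $H$, passing to the infimum over semantically equivalent structures gives $\sghw(\bfA) \leq \semantic{hw}(\bfA)$ pointwise, so a class of bounded $\semantic{hw}$ automatically has bounded $\sghw$. As recalled just before Corollary~\ref{cor:sfhw}, the algorithm of Chen and Dalmau solves $\csp$ in polynomial time on any class of bounded semantic generalized hypertree width, \emph{without} ever constructing a decomposition. Composing these two facts yields tractability; note that the non-\degen hypothesis is not even needed for this direction.

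The genuine content, and the reason the statement is posed as a conjecture, is the necessity direction: tractability of $\csp(\classC)$ should force bounded $\semantic{hw}$. I would argue the contrapositive. Using the characterization of $\semantic{hw}$ in terms of $\sghw$ (Appendix~\ref{sec:shw}) together with the equivalence of bounded $hw$ and bounded $ghw$, and Theorem~\ref{thm:degen} applied to the hypergraphs of the cores (which inherit non-\degen-ness, as bounded VC dimension passes to induced subhypergraphs and cores embed as induced substructures), unbounded $\semantic{hw}$ on a non-\degen class coincides with unbounded $\sfhw$, i.e.\ with unbounded $\fhw$ of the cores. The target is then to show that a class whose cores have unbounded fractional hypertree width admits no polynomial-time $\csp$ algorithm, under some appropriate assumption such as $\ptime \neq \np$ (the precise hypothesis being itself part of what is unclear).

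The natural template is Grohe's bounded-arity dichotomy, where unbounded treewidth of the cores is turned into intractability through the Excluded Grid Theorem: large treewidth forces a large grid, and grids are provably hard. An analogous proof here would require (i) a structural obstruction theorem stating that unbounded fractional hypertree width of the cores forces the appearance of some canonical hard family of substructures, and (ii) a plain, non-parameterized lower bound certifying that this family is intractable. This is precisely where the difficulty concentrates, and I expect it to be the decisive obstacle. Marx's adaptive embedding machinery (Proposition~\ref{prop:subw}) converts unbounded \emph{submodular} width into lower bounds, but it is calibrated to fixed-parameter tractability and to $\ssubw$, not to plain $\ptime$ and $\fhw$; moreover on non-\degen classes $\ssubw$ may stay bounded while $\sfhw$ grows, so that machinery says nothing in exactly the regime that matters. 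No presently known technique yields unconditional or even ETH-based \ptime lower bounds for $\csp$ classes in this gap between bounded $\ssubw$ and unbounded $\sfhw$. Absent such a tool---an excluded-grid-style theorem for fractional hypertree width coupled with a genuine polynomial-time hardness certificate---establishing (i) and (ii) is the missing step, and the conjecture remains open.
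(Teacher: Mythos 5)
This statement is posed in the paper as a \emph{conjecture}; the paper offers no proof of it, only the surrounding motivation (Theorem~\ref{thm:degen}, Corollary~\ref{cor:sfhw}, and the Chen--Dalmau result), so there is no ``paper proof'' to compare against. Your assessment of the situation is essentially correct and matches the paper's own framing. The sufficiency direction you give is sound: bounded $\semantic{hw}$ implies bounded $\sghw$ (indeed the appendix shows $\semantic{hw}=\sghw$, Theorem~\ref{thm:shw}, so the two coincide exactly), and Chen--Dalmau's decomposition-free algorithm then yields tractability, with no use of the non-\degen hypothesis --- all of which the paper's own tools support. Your reduction of the necessity direction to ``unbounded $\fhw$ of the cores on a non-\degen class should preclude \ptime solvability'' is also the right normalization: via Theorem~\ref{thm:shw}, the Adler--Gottlob--Grohe equivalence of bounded $hw$ and $ghw$, and Theorem~\ref{thm:degen} (noting that non-\degen-ness is inherited by the core hypergraphs, since every induced subhypergraph of a core hypergraph has a subset of the edges of the corresponding induced subhypergraph of the original), unbounded $\semantic{hw}$ does collapse to unbounded $\sfhw$ on such classes. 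And you correctly locate the missing ingredient: an excluded-grid-style obstruction for fractional hypertree width together with a plain (non-parameterized) hardness certificate, neither of which exists in the literature; Marx's machinery is calibrated to $\ssubw$ and fixed-parameter tractability and is silent in the regime of bounded $\ssubw$ but unbounded $\sfhw$. So your ``proposal'' is not a proof --- nor could it be, absent a major new technique --- but it is an accurate and well-reasoned account of exactly where the conjecture's difficulty lies.
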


\section{Conclusion \& Outlook}
\label{sec:conclusion}
We have given characterizations of the fixed-parameter tractable classes of CSPs and UCQs.
This allows us to determine the parameterized complexity of problems that have CSP or UCQ formulations by determining if the class of these formulations has bounded $\ssubw$.
This motivates further work on theoretical tools that help to show whether a class has bounded $\ssubw$. We believe that further study of \emph{adaptive width} \cite{DBLP:journals/mst/Marx11}, which is bounded iff $subw$ is bounded, can be a productive avenue of research here.

The characterization of polynomial time solvable CSPs remains
open. We have motivated a new class of non-\degen problems that merits
further research. In particular, we wish to resolve
Conjecture~\ref{conj}, which we believe to be an important step towards
the general problem. To expand on the ideas from Section~\ref{sec:ptime} we show in Appendix~\ref{sec:shw} that $\semantic{hw}=\sghw$, thus demonstrating that, in contrast to the other investigated widths, $hw$ is not necessarily minimal in the core. 

Recent work has proposed the use of hybrid width parameters for the study
of the computational complexity of CSP, e.g.,~\cite{DBLP:conf/cp/GanianOS19}. Such hybrid width parameters, which consider both the query structure and database content, are a natural avenue for further research.

Moreover, we are intrigued by the connections to VC dimension, which
is an important parameter in learnability theory
\cite{DBLP:journals/jacm/BlumerEHW89}. We plan to further investigate
the nature of the relationship between decomposition methods and learnability theory.

\section*{Acknowledgments}
This work was supported by the Austrian Science Fund (FWF):P30930. Georg Gottlob is a Royal Society Research Professor and acknowledges support by the Royal Society for the present work in the context of the project ``RAISON DATA''  (Project reference: RP\textbackslash{}R1\textbackslash{}201074).
\bibliographystyle{named}
\bibliography{ijcai20}

\appendix

\newcommand{\mononm}{core minimal\xspace}
\newcommand{\xp}[1]{\ensuremath{f^ {-1}({#1})}}

\newcommand{\vc}{\mbox{\rm vc}}
\newcommand{\calO}{{\mathcal O}}
\newcommand{\HH}{\ensuremath{H}}
\newcommand{\cigap}[1]{\mbox{\it cigap}(#1)}
\newcommand{\tigap}[1]{\mbox{\it tigap}(#1)}
\newcommand{\ggnew}[1]{#1}

\section{Full Proofs for Section~\ref{sec:main}}

\begin{definition}
  let $\mathcal{A}$ be the class of all relational structures. We call a function $w\colon \mathcal{A} \to \mathbb{R}^+$ \emph{\mononm} if it is invariant under isomorphisms and for any $\bfA \in \mathcal{A}$: $w(\core(\bfA)) \leq w(\bfA)$.
\end{definition}

 \begin{lemma}
  \label{lem:reform}
  Fix $k \geq 1$, and let $w$ be a \mononm function. For each relational structure $\bfA$ the following are equivalent:
  \begin{enumerate}
  \item There exists a $\bfA'$ homomorphically equivalent to $\bfA$ with $w(\bfA') \leq k$.
  \item $w(\core(\bfA)) \leq k$.
  \end{enumerate}
\end{lemma}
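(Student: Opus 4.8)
The plan is to prove the two implications separately, leaning on two facts about cores recalled in Section~\ref{sec:prelimcsps}: that $\core(\bfA)$ is itself homomorphically equivalent to $\bfA$, and that the core is unique up to isomorphism, so that homomorphically equivalent structures have isomorphic cores.

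For the direction $(2) \Rightarrow (1)$, I would simply take $\bfA' = \core(\bfA)$ as the witness. By the definition of the core, $\core(\bfA)$ is homomorphically equivalent to $\bfA$, so $\bfA'$ is an admissible choice, and $w(\bfA') = w(\core(\bfA)) \leq k$ holds directly by assumption. This direction uses nothing beyond the definition of the core and needs no property of $w$ at all.

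For the direction $(1) \Rightarrow (2)$, suppose $\bfA'$ is homomorphically equivalent to $\bfA$ with $w(\bfA') \leq k$. Since $\bfA \simeq \bfA'$, the two structures lie in the same homomorphic-equivalence class, and because the core of a structure is the (unique up to isomorphism) minimal element of that class, we obtain $\core(\bfA) \cong \core(\bfA')$. I would then apply the two defining properties of a \mononm function $w$ in turn: core-minimality gives $w(\core(\bfA')) \leq w(\bfA') \leq k$, and isomorphism invariance together with $\core(\bfA) \cong \core(\bfA')$ gives $w(\core(\bfA)) = w(\core(\bfA')) \leq k$, which is exactly $(2)$.

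The argument is essentially immediate once the right facts about cores are in hand, so I do not expect a genuine obstacle. The only point that deserves care is the claim that homomorphically equivalent structures have isomorphic cores; this is precisely the uniqueness of the core up to isomorphism noted in the preliminaries. Everything else is a one-line invocation of the two hypotheses defining \mononm{}, namely isomorphism invariance and $w(\core(\cdot)) \leq w(\cdot)$.
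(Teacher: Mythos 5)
Your proof is correct and follows essentially the same route as the paper's: the $(2)\Rightarrow(1)$ direction takes $\core(\bfA)$ as the witness, and the $(1)\Rightarrow(2)$ direction combines core-minimality of $w$ with isomorphism invariance via the fact that homomorphically equivalent structures have isomorphic cores. No gaps.
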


\begin{proof}
  The core of $\bfA$ is always homomorphically equivalent to $\bfA$ and
  therefore the upward implication follows. For the downward
  implication we have $w(\core(\bfA')) \leq w(\bfA')$ by the virtue of $w$ being \mononm. If $\bfA'$ is
  homomorphically equivalent to $\bfA$, then their cores must be
  isomorphic, thus $w(\core(\bfA)) = w(\core(\bfA')) \leq w(\bfA') \leq k$.
\end{proof}

\begin{lemma}
  \label{lem:semcore}
    A function $w$ is \mononm if and only if for all structures $\bfA$ we have that $\semantic{w}(\bfA) = w(\core(\bfA)))$.
\end{lemma}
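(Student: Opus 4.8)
The plan is to prove both implications by directly unwinding the definition $\semantic{w}(\bfA) = \inf\{w(\bfA') \mid \bfA' \equiv \bfA\}$ and exploiting the standard fact (already used in the sketch of Lemma~\ref{lem:swidth}) that semantically equivalent structures have isomorphic cores. Throughout I would keep in mind that $\semantic{w}$ is only well-defined when $w$ is invariant under isomorphism, so this is a standing assumption on $w$; the genuine content of \mononm is then the extra inequality $w(\core(\bfA)) \leq w(\bfA)$.

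For the forward direction I would assume $w$ is \mononm. Since $\core(\bfA) \equiv \bfA$, the core is one of the competitors in the infimum defining $\semantic{w}(\bfA)$, which immediately gives $\semantic{w}(\bfA) \leq w(\core(\bfA))$. For the matching lower bound, take any $\bfA'$ with $\bfA' \equiv \bfA$. Because equivalent structures share a common core up to isomorphism, $\core(\bfA') \cong \core(\bfA)$, so by isomorphism invariance $w(\core(\bfA)) = w(\core(\bfA'))$, and by the \mononm property $w(\core(\bfA')) \leq w(\bfA')$. Hence $w(\core(\bfA)) \leq w(\bfA')$ for every $\bfA' \equiv \bfA$; taking the infimum over all such $\bfA'$ yields $w(\core(\bfA)) \leq \semantic{w}(\bfA)$. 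Combining the two inequalities gives $\semantic{w}(\bfA) = w(\core(\bfA))$.

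For the reverse direction I would assume $\semantic{w}(\bfA) = w(\core(\bfA))$ for every structure $\bfA$ and verify the defining inequality of \mononm. Since $\bfA \equiv \bfA$ trivially, $\bfA$ is itself a competitor in the infimum, so $\semantic{w}(\bfA) \leq w(\bfA)$. Substituting the hypothesis $\semantic{w}(\bfA) = w(\core(\bfA))$ gives $w(\core(\bfA)) \leq w(\bfA)$, which is exactly the \mononm condition (isomorphism invariance of $w$ being already assumed). This settles both directions.

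I expect no genuine obstacle here: the whole argument is a short manipulation of the infimum together with the fact that the core is an isomorphism invariant of the $\equiv$-equivalence class. The only points requiring care are to invoke isomorphism invariance of $w$ at the right moment — namely to pass from $w(\core(\bfA'))$ to $w(\core(\bfA))$ in the forward direction — and to make explicit that $\semantic{w}$ presupposes this invariance, so that the reverse direction is not vacuous.
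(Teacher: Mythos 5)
Your proof is correct and follows essentially the same route as the paper: the paper factors the forward direction through an auxiliary lemma (Lemma~\ref{lem:reform}) whose proof is exactly your argument that equivalent structures have isomorphic cores, so $w(\core(\bfA)) = w(\core(\bfA')) \leq w(\bfA')$ for every competitor $\bfA'$, while the reverse direction is the same one-line observation that $\bfA$ itself appears in the infimum. Your explicit remark that isomorphism invariance is a standing assumption (so the reverse direction need only verify the core inequality) is a point the paper leaves implicit.
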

\begin{proof}
    The implication from left to right is immediate from Lemma~\ref{lem:reform}.
  For the other direction we observe that for any structure $\bfA'$ where $\bfA' \simeq \bfA$ we have $\semantic{w}(\bfA') \leq w(\bfA)$ by definition.
  Thus, from $\bfA \simeq \bfA$ we see $w(\core(\bfA))=\semantic{w}(\bfA) \leq w(\bfA)$.
\end{proof}

A \emph{homomorphism} $G \to H$ for
hypergraphs is a mapping $f \colon V(G) \to V(H)$ s.t. if $e \in E(G)$,
then $\{ f(v) \mid v \in e \} \in E(H)$. Function application is
extended to hyperedges and sets of hyperedges in the usual, element-wise, fashion:  for instance, 
for $e \in E(G)$, we write $f(e)$ to denote $\{ f(v) \mid v \in e \}$.
Likewise, for $E' \subseteq E(G)$, we write $f(E)$ to denote $\{ f(e) \mid e \in E' \}$.
Note that if two structures are homomorphic, then also their associated hypergraphs are homomorphic, while the 
converse is, in general, not true.

\begin{lemma}
  \label{lem:homomorph_cover}
  Let $G$ and $H$ be two hypergraphs and
  let $f$ be a homomorphism from $G$ to $H$. 
  Given a fractional edge cover $\mathbf{x}$ of $G$, define $\mathbf{x'}$ s.t. 
  $$x'_h = \sum_{g \in \xp{h}} x_g \qquad h \in E(H).$$
    Then $\mathbf{x'}$ is a fractional edge cover of $f(V(G))$ with the same total weight as $\mathbf{x}$.
\end{lemma}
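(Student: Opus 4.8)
The plan is to verify directly the two defining properties of a fractional edge cover of $f(V(G))$: the covering inequality at each vertex of the image, and the claimed equality of total weights. Both follow from a single structural observation about $f$. Since $f$ is a hypergraph homomorphism, every edge $g \in E(G)$ is sent to exactly one edge $f(g) \in E(H)$; consequently the preimages $\{\xp{h} \mid h \in E(H)\}$ partition $E(G)$ (with some blocks possibly empty). This partition is the bookkeeping device behind both halves of the statement.

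For the equality of total weights I would simply expand the definition of $\mathbf{x'}$ and reorganize the double sum along this partition:
\[
  \sum_{h \in E(H)} x'_h \;=\; \sum_{h \in E(H)} \sum_{g \in \xp{h}} x_g \;=\; \sum_{g \in E(G)} x_g,
\]
where the last step holds because each $g$ contributes $x_g$ to precisely the one block $\xp{f(g)}$. No positivity is needed here; it is pure rearrangement.

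For the covering condition I would fix a vertex $w \in f(V(G))$ and choose any $v \in V(G)$ with $f(v) = w$. The key step is to relate the edges of $H$ incident to $w$ back to the edges of $G$ incident to $v$. Because $v \in g$ implies $w = f(v) \in f(g)$, every edge incident to $v$ in $G$ is mapped into some edge incident to $w$ in $H$; equivalently, the pullback set $\{\, g \in E(G) \mid w \in f(g) \,\}$ contains $I_v$. Summing the definition of $\mathbf{x'}$ over the edges $h$ of $H$ incident to $w$ and again using the partition, I would obtain
\[
  \sum_{h \,:\, w \in h} x'_h \;=\; \sum_{g \,:\, w \in f(g)} x_g \;\geq\; \sum_{g \in I_v} x_g \;\geq\; 1,
\]
the first inequality being the inclusion just noted together with nonnegativity of the weights, and the last inequality being the hypothesis that $\mathbf{x}$ covers $v$ in $G$.

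The only point that genuinely requires care — the "obstacle," modest as it is — is the direction of this set inclusion and the essential role of nonnegativity. In general the pullback $\{\, g \mid w \in f(g)\,\}$ is strictly larger than $I_v$, since there may be edges $g$ with $w \in f(g)$ but $v \notin g$; thus the argument yields only an inequality and relies on $x_g \geq 0$ to discard the extra (nonnegative) contributions. Once this inclusion is set up correctly, both conclusions follow by elementary summation over the partition.
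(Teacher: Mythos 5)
Your proof is correct and follows essentially the same route as the paper's: both arguments rest on the fact that the preimages $\xp{h}$ partition $E(G)$, pick a preimage vertex $v \in f^{-1}(w)$, and use the inclusion of $I_v$ in the set of edges pulled back from $I_w$ together with nonnegativity of the weights. The only difference is organizational — the paper first proves an intermediate claim for arbitrary edge sets $E \subseteq E(G)$ and then specializes to $E = I_v$, whereas you compute the sum over $I_w$ directly — but the underlying ideas are identical.
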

\begin{proof}

We will write $I_v$ for the set of all
incident edges of a vertex $v$.  
  We first show that $\mathbf{x'}$ is fractional edge cover.  
 In an initial step we show that for every $E \subseteq E(G)$,
  the $\mathbf{x'}$ weight of edges in $f(E)$ will always be greater or equal to the $\mathbf{x}$ weight of $E$. We will (briefly) abuse notation and write $f^{-1}(f(E))$ when we in fact refer to the \emph{union of all the preimages}, i.e., the set of all the edges that map to edges in $f(E)$. 
  It is then easy to observe $E \subseteq f^{-1}(f(E))$ and, therefore, we also have 
  $$\sum_{h \in f(E)} x'_h = \sum_{h \in f(E)} \sum_{g \in f^{-1}(h)}  x_g 
  \geq \sum_{g \in f^{-1}(f(E))}  x_g   \geq \sum_{g \in E} x_g.$$
    Now, choose an arbitrary  $w \in f(V(G))$ and any $v \in f^{-1}(w)$. In combination with our previous observation we can then conclude:
  $$\sum_{h \in I_w} x'_h \geq \sum_{h \in f(I_v)}  x'_h \geq \sum_{g \in I_v} x_g \geq 1$$
The leftmost inequality holds, because $f(I_v) \subseteq I_w$. 
The rightmost inequality holds, because we are assuming that $\mathbf{x}$ is a fractional edge cover of $G$.
We have thus shown that 
$\mathbf{x'}$ covers $w$. Since  $w \in f(V(G))$ was arbitrarily chosen, we conclude that 
$\mathbf{x'}$ is a fractional edge cover of $f(V(G))$.

To see that the total weights of both covers are the same, observe:
  $$ \sum_{h \in f(E(G))} x'_h = \sum_{h \in f(E(G))} \sum_{g \in \xp{h}} x_g = \sum_{g \in E(G)} x_g$$
The right equality follows from the fact that every edge of $G$ is present in exactly one set $\xp{h}$.
\end{proof}

\begin{lemma}
  \label{lem:corecover}
  The fractional edge cover number $\rho^*$ of a relational structure is \mononm. 
\end{lemma}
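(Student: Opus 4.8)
The plan is to verify the two requirements in the definition of a \mononm function for $w = \rho^*$: invariance under isomorphism, which is immediate since the fractional edge cover number depends only on the abstract hypergraph of a structure, and the inequality $\rho^*(\core(\bfA)) \leq \rho^*(\bfA)$. All the content sits in the latter, and I would obtain it as a direct consequence of Lemma~\ref{lem:homomorph_cover}.

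First I would fix the relevant homomorphism. As already used in the sketch of Lemma~\ref{lem:swidth}, there is a homomorphism $h\colon \bfA \to \core(\bfA)$ that acts as the identity on the domain of $\core(\bfA)$; in particular $h$ is \emph{surjective} onto that domain. Writing $G = H(\bfA)$ and $H = H(\core(\bfA))$, the map $h$ induces a hypergraph homomorphism $f\colon G \to H$, since (as noted in the excerpt) structure homomorphisms descend to homomorphisms of the associated hypergraphs. Because the vertex set of the hypergraph of a structure is exactly its domain, and $h$ is onto the core's domain, we get the key identity $f(V(G)) = V(H)$.

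Next I would invoke Lemma~\ref{lem:homomorph_cover}. If $\rho^*(\bfA)$ is infinite the desired inequality is trivial, so assume it is finite and let $\mathbf{x}$ be a minimal fractional edge cover of $G$, so its total weight equals $\rho^*(\bfA)$. The lemma then produces a fractional edge cover $\mathbf{x'}$ of $f(V(G))$ having the same total weight. Using $f(V(G)) = V(H)$, the cover $\mathbf{x'}$ in fact covers \emph{all} of $V(H)$, i.e., it is a fractional edge cover of the whole hypergraph $H = H(\core(\bfA))$. Since $\rho^*(\core(\bfA))$ is by definition the minimum weight of such a cover, and the weight of $\mathbf{x'}$ equals that of $\mathbf{x}$, we conclude $\rho^*(\core(\bfA)) \leq \rho^*(\bfA)$, which is exactly what is required.

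The one step that must be handled with care — and the place where the argument could go wrong — is the equality $f(V(G)) = V(H)$. Lemma~\ref{lem:homomorph_cover} only guarantees a cover of the image $f(V(G))$, so a cover of the \emph{full} core hypergraph is obtained precisely because $h$ is surjective onto the core's domain. This is why I would insist on realizing the core via the identity-on-core homomorphism (a retraction), rather than an arbitrary homomorphism $\bfA \to \core(\bfA)$: with only an arbitrary homomorphism the transported weights $\mathbf{x'}$ might fail to cover vertices of the core lying outside the image, and the bound would break.
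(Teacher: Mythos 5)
Your proposal is correct and follows essentially the same route as the paper's own proof: take the (surjective, identity-on-core) homomorphism $\bfA \to \core(\bfA)$, pass to the induced surjective hypergraph homomorphism, and apply Lemma~\ref{lem:homomorph_cover} to transport a minimal fractional edge cover to the core's hypergraph with the same weight. Your explicit emphasis on $f(V(G)) = V(H)$ is precisely the point the paper compresses into the word ``surjective,'' so there is no substantive difference.
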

\begin{proof}
  Let $G$ be the hypergraph of $\bfA$ and $H$ be the hypergraph of
  $\core(\bfA)$.  Since there is a surjective homomorphism from $\bfA$ to
  $\core(\bfA)$, there exists a surjective homomorphism from $G$ to $H$. Then, by
  Lemma~\ref{lem:homomorph_cover}, for any fractional edge cover of
  $G$ there exists a cover of $H$ with equal weight.
\end{proof}

\begin{lemma}
\label{lem:swidthapp}
The functions $fhw$, $adw$, and $subw$ are \mononm.
\end{lemma}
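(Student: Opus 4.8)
The plan is to verify the two requirements of being \mononm. Invariance under isomorphism is immediate for all three functions, so the real work is to establish $w(\core(\bfA)) \le w(\bfA)$ for each $w \in \{\fhw, adw, subw\}$. I would reuse the structural setup already sketched for Lemma~\ref{lem:swidth}. Writing $G = H(\bfA)$ and $H' = H(\core(\bfA))$, there is a retraction of $\bfA$ onto $\core(\bfA)$ fixing every core element, hence a hypergraph homomorphism $\phi\colon G \to H'$ that is the identity on $V(H')$ and maps onto it. Two consequences I would use throughout are: (i) every edge of $H'$ is an edge of $G$ contained in $V(H')$, so the \emph{bag-restriction} $B_u \mapsto B'_u := B_u \cap V(H')$ turns any tree decomposition $(T,(B_u))$ of $G$ into a tree decomposition $(T,(B'_u))$ of $H'$; and (ii) for every $e \in E(G)$ one has $e \cap V(H') \subseteq \phi(e) \in E(H')$, because $\phi$ fixes the core vertices of $e$. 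I would then show, width by width, that bag-restriction never increases the width, which gives $w(H') \le w(G)$ as required.

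\textbf{The $\fhw$ case.} I would take a tree decomposition of $G$ of minimum $\rho^*$-width and restrict it. To bound an individual restricted bag I would push a fractional edge cover of $B_u$ in $G$ forward along $\phi$ as in Lemma~\ref{lem:homomorph_cover}; its proof in fact establishes the set-local statement that a cover of any $S \subseteq V(G)$ induces a cover of $\phi(S)$ in $H'$ of equal total weight. Applying this with $S = B_u$ and using $B'_u \subseteq \phi(B_u)$ (again because $\phi$ fixes core vertices) would give $\rho^*_{H'}(B'_u) \le \rho^*_{G}(B_u)$ for every bag, and hence $\fhw(H') \le \fhw(G)$.

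\textbf{The $subw$ and $adw$ cases.} Here the width is a supremum of $f$-widths over a class $\mathcal{F}$ of functions, so it would suffice to dominate each function on $H'$ by a function on $G$. Given a monotone, edge-dominated, submodular $f'$ on $2^{V(H')}$ with $f'(\emptyset)=0$, I would define its pullback $f(X) := f'(X \cap V(H'))$ on $2^{V(G)}$. Monotonicity, submodularity, and $f(\emptyset)=0$ transfer at once, since $X \mapsto X \cap V(H')$ is monotone and commutes with union and intersection. The only nontrivial check is edge-domination, for which I would invoke (ii): for $e \in E(G)$, monotonicity of $f'$ yields $f(e) = f'(e \cap V(H')) \le f'(\phi(e)) \le 1$. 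Thus $f$ lies in the function class of $G$; restricting an optimal $f$-width decomposition of $G$ then produces a decomposition of $H'$ with $f'(B'_u) = f(B_u)$ per bag, so $f'\text{-width}(H') \le f\text{-width}(G) \le subw(G)$, and taking the supremum over admissible $f'$ gives $subw(H') \le subw(G)$. For $adw$ I would run the identical argument restricted to the modular (vertex-weight) edge-dominated functions defining adaptive width: such functions are again closed under this pullback and stay edge-dominated by the same computation, giving $adw(H') \le adw(G)$.

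I expect the main obstacle to be precisely edge-domination of the transported objects, which is the only place where the homomorphism is genuinely needed rather than mere bookkeeping: after intersecting with $V(H')$, the set $e \cap V(H')$ need not itself be an edge of $H'$, and likewise $B'_u$ is not literally covered by Lemma~\ref{lem:homomorph_cover} as stated for $V(G)$. Both gaps close through the single observation (ii) that $\phi$ sends $e$ to an \emph{enclosing} edge $\phi(e) \supseteq e \cap V(H')$, after which monotonicity (for $subw$ and $adw$) or the set-local form of Lemma~\ref{lem:homomorph_cover} (for $\fhw$) finishes the argument.
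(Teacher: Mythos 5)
Your proposal is correct and follows essentially the same route as the paper's proof: a retraction onto the core fixing core elements, the bag-restriction $B'_u = B_u \cap V(H')$, Lemma~\ref{lem:homomorph_cover} for the $\fhw$ case, and the pullback $f(X) = f'(X \cap V(H'))$ for $subw$ and $adw$. If anything, your explicit verification of edge-domination via $f(e) = f'(e \cap V(H')) \le f'(\phi(e)) \le 1$ spells out a step the paper's write-up leaves implicit.
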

\begin{proof}
Let $\bfA$ be a relational structure and $f$ an
endomorphism from $\bfA$ to $\core(\bfA)$. W.l.o.g., we may assume $f(v)=v$ for all $v \in f(\bfA)$.  
This can be seen as follows: suppose that $f(v)=v$ does not hold for all $v \in f(\bfA)$.  
Clearly, $f$ restricted to $\core(\bfA)$ must be a variable renaming. Hence, there exists 
the inverse variable renaming $f^{-1} \colon \core(\bfA) \rightarrow \core(\bfA)$. Now set 
$f^* = f^{-1}(f(\cdot))$. Then $f^* \colon \bfA \rightarrow \core(\bfA)$ is the desired endomorphism
from $\bfA$ to $\core(\bfA)$ with $f^*(v)=v$ for all $v \in f^*(\bfA)$.  
  
Let $H=(V(H), E(H))$ denote the hypergraph of $\bfA$ and $H'=(V(H'), E(H'))$ the hypergraph of 
$\core(\bfA) = f(\bfA)$.
Furthermore, let $(T, (B_u)_{u\in V(T)})$ be a tree decomposition of $H$. 
Then we create $(T, (B'_u)_{u \in V(T)})$ with the same structure as the
original decomposition and $B'_u = B_u \cap V(H')$.  This gives us a
tree decomposition of $H'$: for every edge $e \in E(H')$ with 
$e \subseteq B_u$, also $e \subseteq B_u \cap V(H')$ holds, because
$e \subseteq V(H')$.  Removing vertices completely from a
decomposition cannot violate the connectedness condition. Actually, some
bags $B'_u$ might become empty but this is not problematic: either we simply allow empty bags in the 
definition of the various notions of width; or we transform $(T, (B'_u)_{u \in V(T)})$ by
deleting all nodes $u$ with empty bag from $T$ and append every node with a non-empty bag as a (further) child of the nearest ancestor node with non-empty bag.

  \begin{description}[topsep=0pt]
  \item[fhw:] We show that if $(T, (B_u)_{u\in V(T)})$ has
    $\rho_H^*$-width $k$, then $(T, (B'_u)_{u \in V(T)})$ has
    $\rho_{H'}^*$-width $\leq k$: By assumption, there is a fractional
    edge cover $\gamma_u$ of every set $B_u$ with weight $\leq k$.  By
    Lemma~\ref{lem:homomorph_cover}, there exists a cover $\gamma'_u$
    of $f(B_u)$ with weight $\leq k$. What is left to show is that
    $\gamma'_u$ also covers $B'_u$.  Recall, that $f(v)=v$ for any
    $v \in V(H')$ and therefore $f(B_u \cap V(H')) = B_u \cap
    V(H')$. It then becomes easy to see that

    $$ B'_u = B_u \cap V(H') = f(B_u \cap V(H')) \subseteq f(B_u)$$
    and in consequence $\gamma'_u$ clearly also covers $B'_u$.
  \item[subw (and adw):]
    Let $\mathcal{F}$ and $\mathcal{F}'$ be the sets of monotone, edge-dominated, submodular functions on $V(H)$ and $V(H')$ respectively. We show that for every $b' \in \mathcal{F}'$
    there exists $b \in \mathcal{F}$, such that  $b'$-width$(H') \leq b$-width$(H)$:

    \sloppy
    Consider an arbitrary monotone, edge-dominated, submodular function 
    $b' \colon 2^{V(H')}\to \mathbb{R}^+$ with $b'(\emptyset)=0$. This function can be extended to a monotone, edge-dominated, 
    submodular function $b\colon 2^{V(H)} \to \mathbb{R}^+$ on $V(H)$ by setting
    $b(X) = b'(X \cap V(H'))$ for every $X \subseteq V(H)$. 
    Now, for any such $b'$ let  $(T, (B_u)_{u \in V(T)})$ be the tree decomposition  for the original hypergraph with minimal $b$-width $=k$.
    Let $(T, (B'_u)_{u \in V(T)})$ refer to the tree decomposition of the core hypergraph, created by the procedure described above. Clearly
    $(T, (B'_u)_{u \in V(T)})$ has $b'$-width $=k$ because by construction
    $b'(B'_u) = b'(B_u \cap V(H')) = b(B_u)$ for every $u \in
    V(T)$.

    Thus, for every monotone edge-dominated submodular function $b'$ on the core hypergraph $H'$, there exists a function $b$ for $H$ where $b'$-width$(H') \leq b$-width$(H)$. As the submodular width is determined by the supremum over all
    permitted functions we see that $subw(H') \leq subw(H)$.

    For $adw$ observe that the definition of function $b$ and the line of argumentation above 
    still holds if we start off with a monotone, edge-dominated, {\em modular} function $b'\colon 2^{V(H)} \to \mathbb{R}^+$.
    
  \end{description}
\end{proof}

\noindent
\begin{proof}[Proof of Lemma~\ref{lem:swidth}]
The theorem now follows from a straightforward combination of Lemmas~\ref{lem:swidthapp},~\ref{lem:corecover} and~\ref{lem:semcore}.
  
\end{proof}

\section{Full Proofs for Section~\ref{sec:ptime}}

The collapse of bounded fhw and bounded hw for bounded VC-dimension is implicitly shown in the proof of Theorem~7.8 of \cite{DBLP:journals/corr/abs-2002-05239}. The statement there puts an emphasis on the computational complexity of fhw checking and does not explicitly state the collapse.
For the sake of completeness and for ease of reading
we restate the theorem in a way that fits our setting and repeat the relevant definitions and segment of the proof here.

\begin{definition}[\cite{1972sauer,1971vc}]
\label{def:vc}
Let $\HH=(V(H),E(H))$ be a hypergraph and $X\subseteq V(H)$ a set of vertices. 
Denote by $E(H)|_X$ the set 
$E(H)|_X =\{X \cap e\, |\, e\in E(H)\}$. 
The vertex set $X$ is called {\em shattered} if 
$E(H)|_X=2^X$.
The {\em Vapnik-Chervonenkis dimension (VC dimension) $\vc(\HH)$} of $\HH$ is 
the maximum cardinality of a shattered subset of $V(H)$. 
\end{definition}

\begin{definition}
\label{def:transversality}
Let $H = (V(H),E(H))$ be a hypergraph. A {\em transversal}  (also known as {\em 
hitting set\/})
of $H$ is a subset $S \subseteq V(H)$ that has a non-empty intersection with 
every edge of $H$. 
The {\em transversality} $\tau(H)$  of $H$  is the 
minimum cardinality of all transversals of $H$.

Clearly, $\tau(H)$ corresponds to the minimum of the following integer linear 
program: 
find a mapping $w: V\rightarrow \{0,1\}$ 
which minimizes $\Sigma_{v\in V(H)}w(v)$ under the condition that
$\Sigma_{v\in e}w(v)\geq 1$ holds for each hyperedge $e\in E$.

The {\em fractional transversality}  $\tau^*$ of $H$ is defined as the minimum 
of
the above linear program when dropping the integrality condition,
thus allowing mappings $w: V\rightarrow \mathbb{R}_{\geq 0}$.
Finally, the {\em transversal integrality gap} $\tigap{\HH}$ of $\HH$ is the 
ratio $\tau(\HH)/\tau^*(\HH)$.
\end{definition}

Recall that computing the mapping $\lambda_u$ for some node $u$ in a
GHD can be seen as searching for a minimal edge cover $\rho$ of the
vertex set $B_u$, whereas computing
$\gamma_u$ in an FHD corresponds to the search for a minimal
fractional edge cover $\rho^*$ \cite{2014grohemarx}. Again, these
problems can be cast as linear programs where the first problem has
the integrality condition and the second one has not.  Further, we can
define the {\em cover integrality gap} $\cigap{H}$ of $H$ as the ratio
$\rho(\HH)/\rho^*(\HH)$.

\begin{lemma}\label{theo:ApproxVC}
  Let $\classH$ be a class of hypergraphs with VC-dimension bounded by
  some constant $d$.  Then for every hypergraph $H \in \classH$
we have
  \(
     hw(H) = O(fhw(H) \log fhw(H))
  \).
\end{lemma}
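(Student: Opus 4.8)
The plan is to bound $hw$ by first bounding $ghw$ and then invoking the constant-factor relationship $ghw(H) \le hw(H) \le 3\,ghw(H)+1$ of~\cite{DBLP:journals/ejc/AdlerGG07}, which gives $hw(H) = O(ghw(H))$. Hence it suffices to establish $ghw(H) = O(fhw(H)\log fhw(H))$. Since $ghw$ is the $\rho_H$-width and $fhw$ is the $\rho^*_H$-width, the task reduces to a purely local question: rounding the fractional edge cover at each bag of an optimal fractional decomposition into an integral one whose size is only a $\log$-factor larger.

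First I would fix a fractional hypertree decomposition $(T,(B_u)_{u\in V(T)})$ of $H$ whose $\rho^*$-width equals $k := fhw(H)$; in particular $\rho^*_H(B_u) \le k$ for every node $u$. The idea is to keep the tree $T$ and the bags $B_u$ unchanged and to replace, at each node $u$, the fractional cover of $B_u$ by an integral cover $\lambda_u$ of controlled size. Because the tree-decomposition conditions (edge coverage and connectedness) depend only on $(T,(B_u)_{u\in V(T)})$, this substitution automatically produces a \emph{generalized} hypertree decomposition, whose width is $\max_u \rho_H(B_u)$.

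The key step, and the main obstacle, is to show that under bounded VC dimension the \emph{cover integrality gap} $\cigap{H[B_u]} = \rho(H[B_u])/\rho^*(H[B_u])$ is $O(\log \rho^*)$ rather than the $\Theta(\log n)$ one must expect for arbitrary set systems. Covering the vertices of $B_u$ by edges is exactly a set-cover instance on the induced subhypergraph $H[B_u]$, and since restricting the ground set to a subset of vertices cannot increase the VC dimension, we have $\vc(H[B_u]) \le \vc(H) \le d$. For set systems of bounded VC dimension the dual VC dimension is bounded as well (by a function of $d$ alone), and the $\epsilon$-net theorem of Haussler and Welzl, combined with the Br\"onnimann--Goodrich LP-rounding scheme, yields an integral cover of size $O\!\big(\rho^*_H(B_u)\,\log \rho^*_H(B_u)\big)$, with the hidden constant depending only on $d$. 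Rather than reproduce this $\epsilon$-net machinery, I would appeal to the corresponding computation carried out in the proof of Theorem~7.8 of~\cite{DBLP:journals/corr/abs-2002-05239}, where exactly this gap bound is obtained.

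Finally, applying this rounding at every node of $T$ and using that $x \mapsto x\log x$ is increasing together with $\rho^*_H(B_u) \le k$, each $\rho_H(B_u)$ is $O(k\log k)$, so the resulting generalized hypertree decomposition has width $O(fhw(H)\log fhw(H))$. This establishes $ghw(H) = O(fhw(H)\log fhw(H))$, and the $ghw$--$hw$ relationship then gives $hw(H) = O(fhw(H)\log fhw(H))$, as claimed. The genuinely delicate point is solely the VC-dimension-based bound on the cover integrality gap; everything else is bookkeeping on decompositions.
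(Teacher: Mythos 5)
Your proposal is correct and follows essentially the same route as the paper: keep the tree and bags of an optimal fractional decomposition, round each bag's fractional edge cover to an integral one using a VC-dimension-bounded cover integrality gap of $O(\log \rho^*)$, and then pass from $ghw$ to $hw$ via the $hw(H)\leq 3\,ghw(H)+1$ bound of Adler, Gottlob, and Grohe. The only difference is cosmetic: you invoke the Haussler--Welzl / Br\"onnimann--Goodrich $\epsilon$-net machinery (and defer to the same Theorem~7.8 computation), whereas the paper obtains the identical gap bound by dualizing to transversals and citing the Ding et al.\ bound together with Assouad's bound on the dual VC dimension.
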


\begin{proof}
The proof proceeds in several steps.

\smallskip

\noindent
{\em Reduced hypergraphs.} 
We consider, w.l.o.g., only
hypergraphs $H$ that satisfy the following 4 conditions: (1) $H$ has no  isolated vertices 
and (2) no empty edges. Moreover, (3) no two distinct vertices in $H$  
have the same edge-type
(i.e., the two vertices occur in precisely the same edges)
and 
(4) no two distinct edges in $H$ have the same vertex-type
(i.e., we exclude duplicate edges). 
Hypergraphs satisfying these conditions will be called ``reduced''. For a full discussion on why these assumptions can be made without loss of generality we refer to~\cite{DBLP:journals/corr/abs-2002-05239}.

\smallskip

\noindent
{\em Dual hypergraphs.} Given a hypergraph $H = \{V,E)$, the dual hypergraph 
$H^d  = (W,F)$ 
is defined as $W = E$ and $F = \{ \{e \in E \mid v \in e\} \mid v \in V\}$.
For the rest of this proof we consider only reduced hypergraphs. 
This ensures that 
$(H^d)^d = H$  holds.

It is well-known and easy to verify that the following relationships
between $H$ and $H^d$ hold for any reduced hypergraph $H$,  
(see, e.g., \cite{duchet1996hypergraphs}):

\smallskip

(1) The edge coverings of $\HH$ and the transversals of $\HH^d$
coincide.

(2) The fractional %
edge coverings of $\HH$ and the fractional transversals of 
$\HH^d$
coincide.

(3) $\rho(\HH)=\tau(\HH^d)$, $\rho^*(\HH)=\tau^*(\HH^d)$, and
$\cigap{\HH}=\tigap{\HH^d}$.

\smallskip

\noindent
{\em VC-dimension.} By a classical result (\cite{ding1994} Theorem (5.4), see also \cite{Bronnimann1995} for related results), for every 
hypergraph $H = (V(H),E(H))$  with at least two edges we have:
$$\tigap{H}= \tau(H)/\tau^*(H) \leq 2\vc(H)\log(11\tau^*(H)).$$
For hypergraphs $H$ with a single edge only, $\vc(H)=0$, and thus the above inequation does not hold. However, for such hypergraphs 
$\tau(H)=\tau^*(H)=1$. By putting this together, we get:
$$\tigap{H}= \tau(H)/\tau^*(H) \leq \max(1,2\vc(H)\log(11\tau^*(H))).$$
Moreover, in \cite{assouad1983}, it is shown that $\vc(\HH^d)<2^{\vc(\HH)+1}$ 
always
holds.
In total, we thus get 

\begin{align*}
\cigap{\HH}=\tigap{\HH^d} & \leq \ggnew{\max(1,}2\vc(\HH^d)\log(11\tau^*(\HH^d))\ggnew{)} \\
& \leq \ggnew{\max(1,}2^{\vc(\HH)+2}\log(11\rho^*(\HH))\ggnew{)}\\
& \ggnew{\leq \max(1,2^{d+2}\log(11\rho^*(\HH)))},\ 
                                                                     \ggnew \\
                          &{\mbox{which is\ } O(\log\rho^*(H))}.
\end{align*}

\noindent
Suppose that $H$ has an FHD $\left< T, (B_u)_{u\in V(T)}, (\lambda)_{u\in V(T)} 
\right>$
of width $k$. Then there exists a GHD of $H$ of width %
$O(k \cdot \log k)$. Indeed, we can find such a GHD by leaving the 
tree 
structure $T$ and the bags $B_u$ for every node $u$ in $T$ unchanged and 
replacing each fractional edge cover $\gamma_u$ of $B_u$ by an optimal integral 
edge cover $\lambda_u$ of $B_u$. By the above inequality, we thus increase the 
weight at each node $u$ only by a factor $\calO(\log k)$. Moreover, we know 
from 
\cite{DBLP:journals/ejc/AdlerGG07} that $hw(H) \leq 3 \cdot ghw(H) + 1$ holds.
In other words, there also exists an HD of $H$ whose width is $O(k \cdot\log k)$. In particular, this also applies to the minimal width FHD, concluding the proof.
\end{proof}

\begin{lemma}
  \label{lem:vcexotic}
  Let $\classH$ be a hypergraph class. If $\classH$ has unbounded VC dimension, then $\classH$ is exotic.
\end{lemma}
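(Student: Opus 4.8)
The plan is to directly unwind the definitions of VC dimension and of an \degen class and to observe that a shattered set of size $n$ is essentially the same object as the witness required for exoticness. First I would fix an arbitrary integer $n \geq 1$ and use the hypothesis that $\classH$ has unbounded VC dimension to pick a hypergraph $H \in \classH$ with $\vc(H) \geq n$. By Definition~\ref{def:vc} this yields a shattered set $S \subseteq V(H)$ of cardinality at least $n$, i.e.\ a set with $E(H)|_S = 2^S$.

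Next I would record the elementary fact that shattering is inherited by subsets: if $S$ is shattered and $U \subseteq S$, then for any $Z \subseteq U$ there is an edge $e \in E(H)$ with $e \cap S = Z$ (because $Z \subseteq S$), and intersecting with $U$ yields $e \cap U = Z$; hence $E(H)|_U = 2^U$. Applying this to a subset $U \subseteq S$ with $|U| = n$ produces a shattered set of size exactly $n$, as required to match the ``set of $n$ vertices'' in the definition of an \degen class.

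It then remains only to count edges. By the definition of the induced subhypergraph, $H[U] = (U, \{\, e \cap U \mid e \in E(H)\,\} \setminus \{\emptyset\})$; since $E(H)|_U = 2^U$, this edge set equals $2^U \setminus \{\emptyset\}$ and therefore contains exactly $2^n - 1$ distinct edges. Thus $H$ together with the $n$-element set $U$ witnesses the \degen condition for this $n$, and since $n$ was arbitrary, $\classH$ is \degen.

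I do not expect any genuine obstacle in this argument; its entire content is the recognition that the two definitions describe the same combinatorial phenomenon. The only point requiring care is the bookkeeping around the empty set: Sauer--Shelah-style shattering asks that \emph{all} $2^n$ traces $e \cap U$ occur (including the empty trace), whereas the \degen condition counts only the nonempty traces that appear as edges of $H[U]$. Confirming that discarding the single empty trace leaves precisely $2^n - 1$ edges, so that the required inequality ``at least $2^n - 1$'' holds with equality, is the one detail to verify, and it is immediate.
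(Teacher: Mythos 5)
Your proposal is correct and follows essentially the same route as the paper's proof: pick a hypergraph of VC dimension at least $n$, take a shattered set, and observe that the induced subhypergraph realizes all $2^n$ traces minus the empty one. The only (harmless) difference is that you explicitly pass to a subset of size exactly $n$ via the subset-inheritance of shattering, a bookkeeping point the paper's proof glosses over by working with a shattered set of size at least $n$.
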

\begin{proof}
  Assuming that $\classH$ has unbounded VC dimension, we show for every integer $n \geq 1$ that there exists a hypergraph $H \in \classH$ with a set of vertices $U \subseteq V(H)$ such that $H[U]$ has at least $2^n-1$ distinct edges.

  Suppose some fixed $n \geq 1$ and let $H \in \classH$ be a
  hypergraph with VC dimension at least $n$. Since $\classH$ has
  unbounded VC dimension such a $H$ always exists. By definition $H$
  now has a shattered set of vertices $X$ with $|X| \geq n$. From the
  similarity in the definition of shattered subsets and vertex induced
  hypergraphs we can observe $H[X] = (X, E(H)|_X \setminus \emptyset)$. Now since $E(H)|_X = 2^X$
  it consists of at least $2^n$ distinct edges. As we remove only one (the empty set), we see that the statement holds.
\end{proof}

\begin{proof}[Proof of Theorem~\ref{thm:degen}] By contraposition of Lemma
\ref{lem:vcexotic} we have that non-exotic classes of hypergraphs also
have bounded VC dimension. From Lemma~\ref{theo:ApproxVC} we see that
bounded fractional hypertree width implies bounded hypertree
width. Lastly, a hypertree decomposition is a special case of a
fractional hypertree decomposition. Hence, bounded hypertree width
also implies bounded fractional hypertree width. 
\end{proof}

\mathchardef\mhyphen="2D
\newcommand{\chw}{\ensuremath{\core\mhyphen\mathit{hw}}}
\newcommand{\hdecomp}{\ensuremath{\left< T, (B_u)_{u\in T}, (\lambda_u)_{u \in T} \right>}}
\newcommand{\ghw}{\mathit{ghw}}
\newcommand{\hw}{\mathit{hw}}
\section{Semantic Hypertree Width}
\label{sec:shw}

We start with some additional definitions that are necessary for this section.

A \emph{hypertree decomposition}~\cite{DBLP:journals/jcss/GottlobLS02} of a hypergraph $H$ is a tuple $\hdecomp$, where $T$ is a rooted tree, for every node $u$ of the tree, $B_u \subseteq V(H)$ is called the \emph{bag} of node $u$, and $\lambda_u \subseteq E(H)$ is the \emph{cover} of $u$.
Furthermore, $\hdecomp$ must satisfy the following properties.
\begin{enumerate}
\item The subgraph $T_v  =\{u \in T \mid v \in B_u\}$ for vertex $v \in V(H)$ is a tree.
\item For every $e \in E(H)$ there exists a $u \in T$ such that $e \subseteq B_u$.
\item For every node $u$ in $T$ it holds that $B_u \subseteq \bigcup \lambda_u$.
\item Let $T_u$ be the subtree of $T$ rooted at node $u$ and let $B(T_u)$ be the union of all bags of nodes in $T_u$.
  For every node $u$ in $T$ it holds that $\bigcup \lambda_u \cap B(T_u) \subseteq B_u$.
\end{enumerate}
The first property is commonly referred to as the \emph{connectedness condition} and the fourth property is called the \emph{special condition}. The \emph{hypertree width} ($\hw$) of a hypertree decomposition is $\max_{u\in T}(|\lambda_u|)$ and the hypertree width of $H$ ($\hw(H)$) is the minimal width of all hypertree decompositions of $H$.

If we exclude the special condition in the above list of properties, we obtain the definition of a \emph{generalized hypertree decomposition} (GHD). The \emph{generalized hypertree width} of hypergraph $H$ ($\ghw(H)$) is defined analogously to before as the minimal width of all generalized hypertree decompositions of $H$. This definition is equivalent to definition of $\ghw$ given in Section~\ref{sec:prelim}.

The special condition demands that if a vertex $v$ occurs in an edge $e$ in $\lambda_u$ and in a bag in the subtree below $u$, then $v$ must also appear in $B_u$. If this property is violated in a GHD, we say that $e$ causes a \emph{special condition violation} (SCV) at node $u$.

\subsection{Semantic Hypertree Width is Semantic Generalized Hypertree Width}

The approach used for
$\fhw, adw, subw$ in the previous section does not work for hypertree with. Constructing a new
tree decomposition by intersecting the bags with the vertices in $H'$
can break the special condition.
Indeed, we will show that $\semantic{hw}(\bfA) = \sghw(\bfA)$ for all structures $\bfA$. Our argument is based on a construction of equivalent structures that fix special condition violations in a generalized hypertree decomposition. This observation positions hypertree width uniquely against all other widths studied in the previous section.

\begin{lemma}
  \label{lem:shwmagic}
  Let $\bfA$ be a structure with $ghw(\bfA) \leq k$. Then, there
  exists an $\bfA'$ with $\bfA' \simeq \bfA$ and $hw(\bfA') \leq k$.
\end{lemma}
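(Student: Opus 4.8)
The plan is to start from a generalized hypertree decomposition (GHD) $\langle T, (B_u)_{u\in T}, (\lambda_u)_{u\in T}\rangle$ of $\bfA$ of width $\le k$ and to modify the underlying structure so that \emph{this very} decomposition becomes a valid hypertree decomposition, while only adding tuples that ``fold back'' onto existing ones. Since the sole obstruction to a GHD being an HD is the special condition, the entire difficulty lies in repairing special condition violations (SCVs) without raising the width and without leaving the homomorphic-equivalence class of $\bfA$; this is exactly the point the preceding remark flags, namely that the bag-intersection trick used for $\fhw$ and $subw$ breaks down here.

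The key idea is to force, at every node $u$, each covering edge to be a subset of the bag $B_u$ by replacing its ``out-of-bag'' vertices with fresh private copies. Concretely, for each node $u$ and each edge $e\in\lambda_u$ I would pick a tuple $t_e = R(a_1,\dots,a_r)$ of $\bfA$ whose scope is $e$, and add to $\bfA'$ a new tuple $t_{e,u}=R(b_1,\dots,b_r)$ where $b_j = a_j$ if $a_j\in B_u$ and otherwise $b_j$ is a brand-new domain element $f_{e,u,j}$ used nowhere else. Crucially this reuses the relation symbol $R$, so $\bfA$ and $\bfA'$ share a signature. The new decomposition keeps the tree $T$, sets $\lambda'_u=\{\,\mathrm{scope}(t_{e,u}) \mid e\in\lambda_u\,\}$, and enlarges the bags to $B'_u = B_u \cup \{\,f_{e,u,j} \mid e \in \lambda_u\,\}$.

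The verification then splits along the HD conditions. Width is immediate since $|\lambda'_u|\le|\lambda_u|\le k$. Connectedness holds because each fresh element lies in exactly one bag, while every original vertex keeps its old connected set of bags, as $B'_u\cap A = B_u$. Edge coverage holds because each new edge is contained in $B'_u$, and each original edge still sits inside the (now larger) bag that contained it in the GHD. The decisive observation is that, by construction, $\bigcup\lambda'_u = B'_u$ at every node: the original parts of the covering edges still union to $B_u$ (since $\lambda_u$ covered $B_u$, so $(\bigcup\lambda_u)\cap B_u = B_u$), and the fresh parts are exactly the added elements. Once each bag coincides with the union of its cover, the special condition becomes vacuous, and all SCVs disappear simultaneously.

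Finally I would establish $\bfA'\simeq\bfA$. The inclusion of $\bfA$ into $\bfA'$ is a homomorphism, since $\bfA'$ only adds tuples over the same signature. Conversely, the map $h$ that is the identity on $A$ and sends each fresh $f_{e,u,j}$ to $a_j$ is a homomorphism $\bfA'\to\bfA$, because it sends every new tuple $t_{e,u}$ back to its originating tuple $t_e$. The main obstacle, and the crux of the argument, is exactly the observation that forcing each cover to equal its bag kills the special condition ``for free''; the fresh-copy trick is what realizes this while preserving homomorphic equivalence and avoiding new relation symbols, after which the remaining HD conditions are routine.
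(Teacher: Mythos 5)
Your proof is correct and rests on exactly the same key idea as the paper's: adding, for a cover edge, a fresh tuple over the same relation symbol in which the out-of-bag vertices are replaced by private new constants, so that the new structure retracts onto the old one while the new edge lies inside the (enlarged) bag. The only difference is presentational — you perform the repair in one global pass over all nodes and all cover edges, making $\bigcup\lambda'_u = B'_u$ everywhere so the special condition becomes vacuous, whereas the paper repairs one special condition violation at a time and argues by a decreasing count; both verifications go through.
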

\begin{proof}
  Let $\mathcal{D} = \left< T, (B_u)_{u\in T}, (\lambda_u)_{u \in T} \right>$ be a
  GHD of $H(\bfA)$ with width $k$. We will show how to add a new tuple to $\bfA$ to get a new $\bfA'$ such that $\bfA'\simeq \bfA$
  and  $H(\bfA')$ has a GHD with width $k$ with \emph{fewer} special condition violations (SCVs) than $\mathcal{D}$. Iterating this step will ultimately lead to a structure that is equivalent to $\bfA$ and has a GHD of width $k$ with no SCVs, i.e., an HD.

  Let $u$ be any node in $T$ where the special condition is violated, i.e., there is some $e^* \in \lambda_u$ such that $(e^* \cap B(T_u)) \not \subseteq (e^* \cap B_u)$. Let $v_1, \dots, v_\ell$ be the vertices in $e^*$ that are not in $B_u$ and let $R^\bfA$ be the relation in $\bfA$ that contains a tuple $t$ that becomes the edge $e^*$ in the hypergraph\footnote{There can be multiple tuples in different relations that correspond to the edge $e^*$ in $H(\bfA)$, it does not matter to which we apply the procedure.}. Now, for every $i\in[\ell]$, create a fresh constant $x_i$. We create a new tuple $t'$ from $t$ by replacing every $v_i$ (the vertices that witness the SCV) by the fresh $x_i$. All other constants in $t$ are copied with no change to $t'$. Add $t'$ to $R$ to obtain the new structure $\bfA'$.

  We first verify that $\bfA'$ has a GHD $\mathcal{D'}$  with width $k$ and one less SCV than $\mathcal{D}$. We can find such a $\mathcal{D'}$ by simply copying $\mathcal{D}$ and only updating $\gamma_u$ and $B_u$ as follows. In $\gamma_u$ we replace $e^*$ by the edge $e'$ that corresponds to our newly created tuple $t'$. To $B_u$ we add all the newly created $x_1, \dots, x_\ell$.  With this change to the bag, every edge remains covered (and the new $e'$ is now covered by $B_u$) and connectedness is unaffected (the new vertices $x_1,\dots,x_\ell$ occur only in this bag). Furthermore, it is clear that we still have $B_u \subseteq \bigcup \lambda_u$ after this update. Finally, while $e$ caused an SCV at node $u$, this is no longer the case with $e'$ since $e'$ is fully contained in $B_u$. Thus our new $\mathcal{D'}$ is a valid GHD of $H(\bfA')$ with one less SCV than $\mathcal{D}$.

  To finalize our argument we still need to show that $\bfA \simeq \bfA'$. First, since every relation in $\bfA'$ is a superset of a relation in $\bfA$, the identity function is a homomorphism from $\bfA$ to $\bfA'$. For the other direction, consider the function $f\colon A' \to A$  that maps $x_i \mapsto v_i$ for $i \in [\ell]$ and every other constant in $\bfA'$ to itself.
  Clearly, $f(t')=t \in R^\bfA$ for the tuples from our construction above. For all other tuples $f$ is the identity function since they do not contain any of the fresh constants $x_i$. All those tuples are present in both structures (in the same relations). Thus, $f$ is a homomorphism from $\bfA'$ to $\bfA$.

  We can therefore move along equivalent structures to (strictly) monotonically decrease the number of SCVs, ultimately yielding an HD with width $k$.
\end{proof}

\begin{theorem}
  \label{thm:shw}
  For any relational structure $\bfA$ it holds that $\semantic{hw}(\bfA) = \sghw(\bfA)$.
\end{theorem}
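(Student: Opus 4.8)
The plan is to prove the two inequalities $\sghw(\bfA) \le \semantic{hw}(\bfA)$ and $\semantic{hw}(\bfA) \le \sghw(\bfA)$ separately. The first is immediate from the width hierarchy, whereas the second is exactly where Lemma~\ref{lem:shwmagic} does the real work.

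For $\sghw(\bfA) \le \semantic{hw}(\bfA)$, I would simply use that $\ghw(H) \le \hw(H)$ holds for every hypergraph $H$, as recalled in Section~\ref{sec:widths}. Hence $\ghw(\bfA') \le \hw(\bfA')$ for every structure $\bfA' \equiv \bfA$, and taking the infimum over the whole equivalence class of $\bfA$ on both sides gives $\sghw(\bfA) \le \semantic{hw}(\bfA)$ straight from the definition of semantic width.

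For the reverse inequality, I would invoke the result of~\cite{DBLP:journals/sigmod/BarceloPR17} that $\sghw(\bfA) = \ghw(\core(\bfA))$, and set $k := \ghw(\core(\bfA))$. Since $\ghw(\core(\bfA)) \le k$, Lemma~\ref{lem:shwmagic} applied to $\core(\bfA)$ yields a structure $\bfA'$ with $\bfA' \simeq \core(\bfA)$ and $\hw(\bfA') \le k$. As $\simeq$ is transitive and $\core(\bfA) \simeq \bfA$, we have $\bfA' \simeq \bfA$, i.e.\ $\bfA' \equiv \bfA$; thus $\bfA'$ is an equivalent structure of hypertree width at most $k$, and so $\semantic{hw}(\bfA) \le \hw(\bfA') \le k = \sghw(\bfA)$. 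Combining the two bounds gives the claimed equality.

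The point to stress is why this is not a triviality: one cannot bound $\hw(\core(\bfA))$ by $\ghw(\core(\bfA))$, since hypertree width need not attain its minimum at the core and $\hw(\core(\bfA))$ may be strictly larger than $k$. All of the difficulty is therefore concentrated in Lemma~\ref{lem:shwmagic}, which produces a \emph{different} equivalent structure whose hypertree width matches the $\ghw$ bound by repeatedly eliminating special condition violations. Granting that lemma, the only subtlety left is to apply it to a $\ghw$-minimal representative --- the core --- rather than to $\bfA$ itself, and then to return to the equivalence class of $\bfA$ via transitivity of $\simeq$.
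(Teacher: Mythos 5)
Your proposal is correct and follows essentially the same route as the paper: the lower bound $\sghw(\bfA) \le \semantic{hw}(\bfA)$ from the pointwise inequality $ghw \le hw$, and the upper bound by applying Lemma~\ref{lem:shwmagic} to $\core(\bfA)$ (a $ghw$-minimal equivalent representative) and returning to $\bfA$'s equivalence class by transitivity of $\simeq$. The only cosmetic difference is that you cite the Barcel\'o et al.\ result for $\sghw(\bfA) = ghw(\core(\bfA))$ where the paper points to Lemma~\ref{lem:swidth}; your attribution is arguably the more accurate one, since that lemma as stated covers $\rho^*$, $\fhw$, and $subw$ but not $ghw$.
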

\begin{proof}
  Suppose $\sghw(\bfA) = k$, then $ghw(\core(\bfA)) = k$  by Lemma~\ref{lem:swidth}. From Lemma~\ref{lem:shwmagic} it now follows that there exists an $\bfA'$ such that $hw(\bfA')\leq k$ and $\bfA' \simeq \core(\bfA) \simeq \bfA$. Thus, $\semantic{hw}(\bfA) \leq k$. On the other hand, in general for any hypergraph $H$ we have $ghw(H)\leq hw(H)$ and therefore $\semantic{hw}(\bfA)$ can not be lower than $\sghw(\bfA)$. Hence, $\semantic{hw}(\bfA)=k=\sghw(\bfA)$.
\end{proof}

 \end{document}